\let\newpf\proof \let\proof\relax
\def\bm{\begin{matrix}}
\def\em{\end{matrix}}
\newcommand{\bt}{\begin{thm}}
\newcommand{\et}{\end{thm}}
\newcommand{\bl}{\begin{lemma}}
\newcommand{\el}{\end{lemma}}
\newcommand{\beq}{\begin{eqnarray}}
\newcommand{\eeq}{\end{eqnarray}}
\def\be{\begin{equation}}
\def\ee{\end{equation}}
\def\ba{{\begin{align}}}
\def\ea{{\end{align}}}
\def\0{{\mathbf 0}}
\newtheorem{thm}{Theorem}[section]
\newtheorem{lemma}[thm]{Lemma}
\theoremstyle{remark}
\newtheorem{rem}{Remark}[section]
\numberwithin{equation}{section}
\def \bn {\hfill \\ \smallskip\noindent}
\theoremstyle{definition}
\newtheorem{definition}{Definition}[section]
\def\proof{\bn {\bf Proof.} }
\def\note#1
\newcommand{\dist}{\operatorname{dist}}
\newcommand{\Q}{{\mathbb Q}}
\newcommand{\R}{{\mathbb R}}
\newcommand{\T}{{\mathbb T}}
\newcommand{\Z}{{\mathbb Z}}
\def\B0{{\bold{0}}}
\def\Empty{}
\newcommand\oplabel[1]{
  \def\OpArg{#1} \ifx \OpArg\Empty {} \else
  	\label{#1}
  \fi}
\newcommand{\comm}[1]{}
\newcommand{\comment}[1]{}
\begin{document}

\title{Pure point spectrum for the Maryland model: a constructive proof}

\author{Svetlana Jitomirskaya and Fan Yang}

\thanks{}

\begin{abstract}
We develop a constructive method to prove and study pure point spectrum for the
Maryland model with Diophantine frequencies.
\end{abstract}

\maketitle

\section{Introduction}

The Maryland model
is a  
discrete self-adjoint 
Schr\"{o}dinger operator on $\ell^2(\mathbb{Z})$ of the form
 \begin{equation}\label{marylandmodel}
(H_{\lambda,\alpha,\theta}u)_n=u_{n+1}+u_{n-1}+\lambda\tan\pi(\theta+n\alpha)u_n.
\end{equation}
where $\lambda \in \R$ is called the coupling, $\alpha \in \R$ the frequency, and
$\theta \in \R$ is the phase.  In this paper we will assume $\alpha \in
\R \setminus \Q$, $\lambda>0$. Let $\Theta\triangleq\frac{1}{2}+\alpha
\Z+\Z$. Clearly, for $\theta\in \Theta$ the operator is not well
defined. From now on when we say ``all $\theta$'', we will mean ``$\theta\notin \Theta$''. 

Maryland model is a linear version of the quantum kicked rotor in the momentum space, originally proposed by Grempel,
Fishman and Prange \cite{fgp}.  As an exactly solvable example of the
family of incommensurate models, it attracts continuing interest in
physics, e.g. \cite{berry}, \cite{fishman2010}, \cite{GKDS2014}. 

For Diophantine frequencies $\alpha,$ Maryland model has localization:
pure point
spectrum with exponentially decaying eigenfunctions, for all $\theta$ \cite{fp1984,simm}. In fact, it was recently
shown in \cite{maryland} that $\sigma_{pp}(H_{\lambda,\alpha,\theta})$
can be characterized arithmetically, in an exact way, for all parameters. Namely, an index $\delta(\alpha,\theta) \in [-\infty,\infty]$ was  introduced
in \cite{maryland} and it was shown that  $\sigma_{pp}(H_{\lambda,\alpha,\theta})=
\{E: L_\lambda(E)\geq \delta(\alpha,\theta)\},$ while $\sigma_{sc}(H_{\lambda,\alpha,\theta})=\overline{\{E:L_{\lambda}(E) <\delta (\alpha, \theta) \}}$ where $L_\lambda(E)$
is the Lyapunov exponent, see (\ref{LEmaryland}) (which for the Maryland model
does not depend on $\alpha,\theta).$

 It should be noted that all the
proofs of localization so far, including those mentioned above, as
well as
the original physics paper \cite{fgp}, have been indirect: based on a
Cayley transform that reduced the eigenvalue problem to solving 
certain explicit cohomological equation. In this paper we present a different
approach, by proving exponential decay of all polynomially bounded
solutions directly. The eigenfunctions of the Maryland model are, as 
a result of indirect analysis, known exactly, yet the formulas don't
make it easy to make conclusions about their behavior, which is quite
interesting, with transfer matrices satisfying certain exact
renormalization \cite{fs}. The advantage of our approach is that it
provides a completely different and rather promising way to study the solutions, for example, their asymptotics and
various other features. For example, the Maryland eigenfunctions are expected,
through numerics, to have hierarchical structure driven by the
continued fraction expansion of the frequency, and our method 
has the potential to be developed to study that, as was recently done for the almost Mathieu
operator in \cite{jl1}. Moreover, we expect to be able to also use
this method to study some features of solutions/spectral
measures/quantum dynamics in
the singular continuous regime,\footnote{First studied in physics
  literature in \cite{wave}.} as was done for the almost Mathieu
operator in \cite{jlt}.  In general, Maryland model, being exactly solvable, has been a very useful
laboratory in the field of quasiperiodic operators, as a source of
both general conjectures and counterexamples. The possibility of
direct analysis of Maryland eigenfunctions, presented in this
manuscript,  provides a new very important tool to this laboratory.

Here we show how the argument works in the
simplest, that is Diophantine, case. We will develop a more delicate method  and apply it to study the
full localization region in the upcoming work \cite{hjy}.


Let $\frac{p_n}{q_n}$ be the continued fraction approximants of $\alpha\in \R \setminus \Q$.
Let
\begin{equation}
\beta=\beta(\alpha)=\limsup_{n\rightarrow\infty}\frac{\ln q_{n+1}}{q_n},
\end{equation} 
We call $\alpha$ {\it Diophantine} if $\beta(\alpha)=0$.





A formal solution $\phi(x)$ of $H\phi=E\phi$ is called a generalized eigenfunction if $\phi$ is a non-trivial solution, and $|\phi(x)| \leq C(1+|x|)$ for some constant $0<C< \infty$.

Our main result is:
\begin{thm}\label{main}
For Diophantine $\alpha$ and any $\theta$, any generalized
eigenfunction of $H_{\lambda, \alpha, \theta}$ decays exponentially.
\end{thm}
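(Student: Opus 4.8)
The plan is to exploit the special algebraic structure of the $\tan$ potential via the transfer matrix cocycle, but to work directly with a putative polynomially bounded solution $\phi$ rather than passing through the Cayley transform. Write the transfer matrix over a block of length $q_n$: $A_{q_n}(\theta) = \prod_{j=q_n-1}^{0} \begin{pmatrix} E - \lambda\tan\pi(\theta+j\alpha) & -1 \\ 1 & 0 \end{pmatrix}$. Because $\lambda\tan\pi(\cdot)$ is a rational function of $e^{2\pi i\theta}$ of a very particular form, the product telescopes: the standard fact (used in \cite{fgp}, \cite{fs}) is that $A_{q_n}(\theta)$, viewed as a function of $z = e^{2\pi i\theta}$, extends to a rational matrix function whose $(q_n)$-fold structure is controlled by $\prod_{j}(z - e^{-2\pi i j\alpha})$ type factors, so that one obtains an \emph{exact} expression for $\det$ and for the entries' growth. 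The first step is therefore to record this renormalization identity and extract from it the quantitative statement that, for the Diophantine $\alpha$ and any fixed $E$, the $q_n$-step transfer matrix norm grows like $e^{q_n(L_\lambda(E) + o(1))}$ uniformly in $\theta\notin\Theta$, with the $o(1)$ controlled because $\beta(\alpha)=0$ keeps the poles $e^{-2\pi i j\alpha}$, $0\le j < q_n$, away from the unit circle at a subexponential rate.

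Next I would set up the standard dichotomy for polynomially bounded solutions. Fix a generalized eigenfunction $\phi$ with $H\phi = E\phi$ and $|\phi(x)|\le C(1+|x|)$; necessarily $E\in\sigma(H_{\lambda,\alpha,\theta})$, and by the Combes--Thomas / subordinacy circle of ideas (or simply because the Maryland Lyapunov exponent $L_\lambda(E)$ from (\ref{LEmaryland}) is strictly positive on the whole spectrum — this is the crucial input that distinguishes Maryland from almost Mathieu), we have $L_\lambda(E) > 0$. The core of the argument is then a block-resolvent / Green's function estimate: using the uniform upper bound on $\|A_{q_n}\|$ together with the uniform \emph{lower} bound coming from $\det A_{q_n} = 1$ and the renormalization formula, one shows that for a suitable scale $q_n$ the finite-volume operator on a block of length $\sim q_n$ around any site $m$ with $m$ not too close to a pole of the potential has Green's function decaying like $e^{-L_\lambda(E)|x-y|}$ up to subexponential corrections. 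Poisson's formula $\phi(x) = G_{\Lambda}(x, \partial\Lambda) \cdot (\text{boundary values of }\phi)$ then gives $|\phi(x)| \le (\text{poly})\cdot e^{-L_\lambda(E)(\text{dist to boundary})}$, and iterating across consecutive blocks propagates the decay to all of $\Z$, yielding exponential decay of $\phi$ with rate arbitrarily close to $L_\lambda(E)$.

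The main obstacle — and the place where the Diophantine hypothesis is genuinely used — is the control of the potential's singularities: $\tan\pi(\theta + n\alpha)$ has poles accumulating (for irrational $\alpha$) densely, and near such $n$ the transfer matrix entries blow up, so the naive uniform-in-$\theta$ bound fails on an exceptional set of sites. The resolution is to choose the block length commensurate with a denominator $q_n$ and to use that, by the three-distance theorem and $\beta(\alpha)=0$, within any window of length $q_n$ the quantity $\|\theta + j\alpha\|_{\R/\Z}$ is bounded below by $e^{-o(q_n)}$ except possibly at one value of $j$, which can be absorbed into the boundary of the block; this is exactly the mechanism by which the renormalization in \cite{fs} stays under control. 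Handling this one bad site per block, and checking that the accumulated subexponential errors over the $\sim x/q_n$ blocks needed to reach site $x$ do not overwhelm the exponential gain $e^{-L_\lambda(E)x}$ (which again uses $\beta(\alpha)=0$ so that we may take $q_n$ growing, $\ln q_{n+1}/q_n \to 0$), is the technical heart of the proof. Everything else — positivity of $L_\lambda$, the telescoping identity, Poisson's formula — is either classical or an algebraic identity special to this model.
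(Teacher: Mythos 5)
Your proposal is conceptually different from what the paper does, and there is a genuine gap at its technical heart. You propose to get a lower bound on the $q_n$--step transfer matrix that is \emph{uniform in $\theta$}, via $\det A_{q_n}=1$ plus the renormalization identity of \cite{fs}. But no such uniform lower bound of size $e^{q_n(L-o(1))}$ can hold: the determinant $P_{q_n}(\theta,E)$ vanishes at roughly $q_n$ values of $\theta$ (Dirichlet eigenvalues of the truncated operator), so $\|A_{q_n}(\theta)\|$ is not uniformly large, and $\det=1$ only gives the trivial bound $\|A_{q_n}\|\ge 1$. This is precisely the obstruction that makes lower bounds the hard part of every localization proof, and neither $\beta(\alpha)=0$ nor the pole structure of $\tan$ removes it. Moreover, leaning on the \cite{fs} renormalization identity puts you back in the ``indirect, exploiting exact solvability'' camp that the paper is explicitly trying to leave.

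The paper's actual mechanism for the lower bound is quite different and is worth contrasting. It replaces $D$ by the non-singular cocycle $F=\cos\pi\theta\cdot D$ and the normalized determinants $\tilde P_k$, and then makes two observations that together supply the missing lower bound without any uniformity claim: (i) $\tilde P_k(\theta)/\cos^k\pi\theta$ is a degree-$k$ polynomial in $\tan\pi\theta$, so Lagrange interpolation at $k+1$ sample points $\{\theta+l\alpha\}$ reconstructs $\tilde P_k$ from its values there; (ii) $\ln|\tilde P_k(2\theta)|$, as a function of $z=e^{2\pi i\theta}$, is subharmonic, and evaluating the corresponding polynomial $f_k$ at $z=0$ produces a three-term recursion whose growth rate is exactly $L(E)$, giving the \emph{averaged} lower bound $\tfrac1k\int\ln|\tilde P_k|\ge L(E)-\ln 2$. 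The interpolation nodes are split into two intervals $I_1$ (around the origin) and $I_2$ (around the site $k$ under study). The assumption that $\phi$ is a nonzero polynomially bounded solution with $\phi(0)\ne 0$ is used to rule out $\tilde P_{h-1}$ being large on $I_1$ (otherwise the Green's function expansion would force $\phi(0)=0$). Combined with uniformity of the interpolation nodes, the averaged lower bound then forces $\tilde P_{h-1}$ to be large at some node in $I_2$, which yields regularity of $k$ and hence exponential decay. This sidesteps the false uniform lower bound you invoke: the lower bound is only produced at one point of $I_2$, and only because $\phi$ is a generalized eigenfunction. If you want a working proof along the lines you sketched, you would need to replace the uniform lower bound with some averaged or interpolation-based mechanism of this kind, or with a large-deviation estimate for $\ln\|A_{q_n}(\theta)\|$, which is itself a nontrivial theorem for singular cocycles.
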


\begin{rem}
By  Schnol's theorem \cite{berez,rui}, Theorem \ref{main} is
  equivalent to the statement that $H_{\lambda, \alpha, \theta}$ has
  pure point spectrum with exponentially decaying eigenfunctions, a
  known result, as already mentioned. We
  choose to formulate it the way we do to underscore the new method of
  proof.
  \end{rem}

There has been a number of papers lately with constructive proofs of localization
with arithmetic conditions  for the almost Mathieu and extended
Harper's model \cite{kjs,AJ1,liuyuan,jl1,jl2}, all dealing
with $\cos$ potential. Here we show that the method of \cite{j} can
also be developed in an even simpler way to treat the Maryland model.

\section{Preliminaries}

Let $\mathbb{T}=\mathbb{R}/\mathbb{Z}$ be the one dimensional
torus. For $x \in \mathbb{R}$, let $\|x\|_{\mathbb{T}}= \dist
(x,\mathbb{Z})$. Later, we will also sometimes write $\|x\|$ for $\|x\|_{\T}$.

\subsection{Cocycles and Lyapunov exponents}
For $\alpha\in \R$, and $A:\T \to M_2(\mathbb{C})$ or  $A:\T \to
M_2(\mathbb{C})/\{\pm 1\}$ Borel measurable satisfying 
\begin{align}\label{cocyclecondition}
\log|\det(A(\theta))| \in L^1(\T),
\end{align}
we call the pair $(\alpha, A)$ a {\it cocycle} understood as a linear skew-product on $\T\times \mathbb{C}^2$ (or $\T\times \mathbb{PC}^2$) defined by 
\begin{align*}
(\alpha,A): (\theta, v) \mapsto (\theta+\alpha, A(\theta) \cdot v )
\end{align*}
The {\it Lyapunov exponent} of $(\alpha, A)$ is defined by
\begin{align}\label{defLE}
L(\alpha, A)=\lim_{k\rightarrow \infty}\frac{1}{k}\int_{\T}\ln{\|A(\theta+(k-1)\alpha) \cdots A(\theta+\alpha) A(\theta)\|}\mathrm{d}\theta.
\end{align}
If we consider the eigenvalue equation of the Maryland model $H_{\lambda, \alpha, \theta}\phi=E\phi$, then any solution can be reconstructed via the following relation
\begin{align*}
\left(
\begin{matrix}
\phi(k+1)\\
\phi(k)
\end{matrix}
\right)
=
D(\theta+k\alpha, E)
\left(
\begin{matrix}
\phi(k)\\
\phi(k-1)
\end{matrix}
\right),
\end{align*}
where 
\begin{align}\label{transferA}
D(\theta, E)=
\left(
\begin{matrix}
E-\lambda\tan{\pi \theta}\ &-1\\
1                          &0
\end{matrix}
\right).
\end{align}
Iterating this process, we will get
\begin{align*}
\left(
\begin{matrix}
\phi(k)\\
\phi(k-1)
\end{matrix}
\right)
=
D_k(\theta, E)
\left(
\begin{matrix}
\phi(0)\\
\phi(-1)
\end{matrix}
\right),
\end{align*}
where
\begin{align*}
\left\lbrace
\begin{matrix}
D_k(\theta, E)=D(\theta+(k-1)\alpha, E)\cdots D(\theta+\alpha, E)D(\theta, E)\ \mathrm{for}\ k\geq 1,\\
D_0(\theta, E)=\mathrm{Id},\\
D_k(\theta, E)=(D_{-k}(\theta+k\alpha, E))^{-1}\ \mathrm{for}\ k\leq -1.
\end{matrix}
\right.
\end{align*}
The pair $(\alpha, D)$ is the Schr\"odinger cocycle corresponding to
the Maryland model. We will denote the Lyapunov exponent $L(\alpha, D(\cdot, E))$ by $L(E)$. $D_k$ is called the {\it $k$-step transfer matrix}. It was shown in \cite{fgp} that
\begin{align}\label{LEmaryland}
e^{L(E)}+e^{-L(E)}=\frac{\sqrt{(2+E)^2+\lambda^2}+\sqrt{(2-E)^2+\lambda^2}}{2}.
\end{align}

Note that the cocycle $(\alpha, D)$ is singular because it
contains $\tan{\pi\theta}$. As it is more convenient to work with
non-singular cocycles, we introduce
\begin{align}\label{defnonsingular}
F(\theta, E)=\cos{\pi \theta}\cdot D(\theta, E)=
\left(
\begin{matrix}
E\cos{\pi \theta}-\lambda\sin{\pi \theta}\ &-\cos{\pi \theta}\\
\cos{\pi \theta}                          &0
\end{matrix}
\right).
\end{align}
Note that $F$ is an $M_2(\R)/\{\pm 1\}$ cocycle (that is, defined up to a sign).
We will denote its Lyapunov exponent $L(\alpha, F(\cdot, E))$ by
$\tilde{L}(E)$. Clearly by (\ref{defLE}) and the fact that $\int_{\T} \ln|\cos\pi \theta| d\theta=-\ln2$, we have the following relation between $L(E)$ and $\tilde{L}(E)$.
\begin{align}\label{LEtildeLE}
\tilde{L}(E)=L(E)-\ln{2}.
\end{align}

The following control of the norm of the transfer matrix of a uniquely
ergodic continuous cocycle by the Lyapunov exponent is well known.
\begin{lemma}\label{upperbounds}$\mathrm{(}$e.g. \cite{Furman, JMavi}$\mathrm{)}$
Let $(\alpha, M)$ be a continuous cocycle, then for any $\epsilon>0$, for $|k|$ large enough,
\begin{align*}
\|M_k(\theta)\|\leq e^{|k|(L(\alpha, M)+\epsilon)}\ \mathrm{for}\ \mathrm{any}\ \theta\in\T.
\end{align*}
\end{lemma}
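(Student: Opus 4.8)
The plan is to reproduce the standard argument (essentially due to Furman) that produces uniform upper bounds for subadditive cocycles over uniquely ergodic base dynamics. Write $f_n(\theta):=\log\|M_n(\theta)\|$ for $n\ge 1$. The identity $M_{n+m}(\theta)=M_m(\theta+n\alpha)M_n(\theta)$ together with submultiplicativity of the operator norm gives the subadditive relation $f_{n+m}(\theta)\le f_n(\theta)+f_m(\theta+n\alpha)$, and iterating it yields $f_{kN+r}(\theta)\le\sum_{j=0}^{k-1}f_N(\theta+jN\alpha)+f_r(\theta+kN\alpha)$ for every $k\ge 1$ and $0\le r<N$. Since $M$ is continuous on the compact torus and $\log|\det M|\in L^1(\T)$, each $f_n$ is bounded above and satisfies $\int_\T f_n\,d\theta>-\infty$; hence Fekete's subadditive lemma applies and $\lim_{n}\tfrac1n\int_\T f_n\,d\theta=\inf_n\tfrac1n\int_\T f_n\,d\theta=L(\alpha,M)$.

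Now fix $\epsilon>0$ and pick $N$ with $\tfrac1N\int_\T f_N\,d\theta<L(\alpha,M)+\epsilon$. Replacing $f_N$ by its truncation $g_N:=\max(f_N,-C)$ with $C$ large enough that also $\tfrac1N\int_\T g_N\,d\theta<L(\alpha,M)+\epsilon$ (possible by monotone convergence, since $\int_\T f_N\,d\theta>-\infty$), one obtains a genuinely continuous function $g_N\ge f_N$ on $\T$. Because $\alpha\notin\Q$, the rotation by $N\alpha$ remains irrational and hence uniquely ergodic on $\T$, so by the uniform ergodic (Oxtoby) theorem $\tfrac1k\sum_{j=0}^{k-1}g_N(\theta+jN\alpha)\to\int_\T g_N\,d\theta$ uniformly in $\theta$. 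Plugging this into the iterated subadditive inequality, together with the crude bound $C_N:=\max_{0\le r<N}\sup_\theta f_r(\theta)<\infty$, and writing $n=kN+r$, gives $\tfrac1n f_n(\theta)\le\tfrac1N\int_\T g_N\,d\theta+o(1)<L(\alpha,M)+2\epsilon$ uniformly in $\theta\in\T$ as $n\to\infty$. Since $\epsilon>0$ is arbitrary, this is exactly the asserted bound for all sufficiently large positive $k$.

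For negative indices one uses $M_{-m}(\theta)=M_m(\theta-m\alpha)^{-1}$. In the $\SL_2$ situation relevant here (or, more generally, whenever $\int_\T\log|\det M|\,d\theta=0$ after normalization), the singular values of a $2\times2$ matrix $A$ satisfy $\|A^{-1}\|=\|A\|/|\det A|$, and $\log|\det M_m(\theta)|$ is a Birkhoff sum of the continuous function $\log|\det M|$, hence $o(m)$ uniformly in $\theta$ by the same uniform ergodic theorem; applying the positive-index bound at the point $\theta-m\alpha$ then yields $\|M_{-m}(\theta)\|\le e^{m(L(\alpha,M)+\epsilon)}$ for $m$ large. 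The only point that is not pure bookkeeping is the upgrade from the $L^1$/a.e.\ convergence of $\tfrac1n f_n$ to a bound uniform in the phase $\theta$; this is precisely where unique ergodicity of the base rotation enters, and it is the step I would treat most carefully, the truncation $g_N$ being the small technical device needed to handle cocycles whose determinant is allowed to vanish.
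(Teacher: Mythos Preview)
The paper does not supply its own proof of this lemma; it is quoted as a known result with references to \cite{Furman,JMavi}. Your argument for $k>0$ is exactly the standard Furman proof---subadditivity of $f_n=\log\|M_n\|$, Fekete to choose $N$ with $\tfrac1N\int_\T f_N<L(\alpha,M)+\epsilon$, truncation $g_N=\max(f_N,-C)$ to obtain a continuous integrand, and the uniform ergodic theorem for the irrational rotation by $N\alpha$---and it is correct. So for positive iterates you are reproducing precisely the cited argument.

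One small slip in the negative-$k$ paragraph: you call $\log|\det M|$ ``the continuous function'' and apply the uniform ergodic theorem to it, but the hypothesis $\log|\det M|\in L^1(\T)$ does not guarantee continuity. For the regularized Maryland cocycle $F$ used in this paper one has $\log|\det F(\theta)|=2\log|\cos\pi\theta|$, which is $-\infty$ at $\theta=\tfrac12$; at such points neither the uniform ergodic theorem nor even the definition of $M_{-m}(\theta)$ is available, so a uniform-in-$\theta$ bound cannot hold as stated. This causes no trouble for the paper, since the lemma is only ever invoked with positive $k$ (Lemma~\ref{upperbddtildeP} and Remark~\ref{partf}). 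If you want the two-sided statement to go through verbatim, add the hypothesis that $\det M$ is nowhere vanishing, which makes $\log|\det M|$ genuinely continuous and your reduction via $\|A^{-1}\|=\|A\|/|\det A|$ valid.
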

\begin{rem}\label{partf}
Considering 1-dimensional continuous cocycles, a corollary of Lemma \ref{upperbounds} is that if $g$
is a continuous function such that $\ln |g| \in L^1(\T)$, 
then for any $\epsilon > 0$, and $b-a$ sufficiently large, 
$$|\prod_{j=a}^b g (\theta+j\alpha) | \leq e^{(b-a+1)(\int_{\T} \ln|g| \mathrm{d}\theta+\epsilon)}.$$
In particular, we obtain upper bound of $|\prod_{j=a}^{b} \cos \pi (\theta+j \alpha)|$ as follows:
\begin{equation}\label{cosln2control}
|\prod_{j=a}^{b} \cos \pi (\theta+j \alpha)| \leq e^{(b-a+1)(-\ln2+\epsilon)}.
\end{equation}
\end{rem}

\subsection{A closer look at the transfer matrix}
If we consider the Schr\"odinger cocycle $(\alpha, D(\theta, E))$, it turns out $D_k(\theta, E)$ has the following expression
\begin{equation}\label{PinA}
D_k(\theta,E)=
\left(
\begin{array}{cc}
P_k(\theta,E) & -P_{k-1}(\theta+\alpha,E) \\
P_{k-1}(\theta,E) & -P_{k-2}(\theta+\alpha,E)
\end{array}
\right),
\end{equation}
where $P_k(\theta,E)=\det{[(E-H_{\theta})|_{[0,k-1]}]}$.

Let $\tilde{P}_k(\theta, E):\R/2\Z\to\R$ be defined as $\tilde{P}_k(\theta, E)=\prod_{j=0}^{k-1} \cos \pi (\theta+j \alpha)\cdot P_k(\theta, E)$.  Then clearly
\begin{align}\label{tildePinF}
F_k(\theta, E)=
\left(
\begin{array}{cc}
\tilde{P}_k(\theta,E) & -\tilde{P}_{k-1}(\theta+\alpha,E)\cos{\pi\theta} \\
\tilde{P}_{k-1}(\theta,E)\cos{\pi (\theta+(k-1)\alpha)} & -\tilde{P}_{k-2}(\theta+\alpha,E)\cos{\pi\theta}\cos{\pi(\theta+(k-1)\alpha)}
\end{array}
\right).
\end{align}
By the fact that $F$ is continuous and by  (\ref{tildePinF})  and Lemma \ref{upperbounds}, we have the following upper bound on $\tilde{P}_k$.
\begin{lemma}\label{upperbddtildeP}
For any $\epsilon>0$ for $|k|$ large enough,
\begin{align}
|\tilde{P}_k(\theta, E)|\leq e^{(\tilde{L}(E)+\epsilon)|k|}\ \mathrm{for}\ \mathrm{any}\ \theta\in \T.
\end{align}
\end{lemma}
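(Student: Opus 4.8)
The plan is to derive Lemma~\ref{upperbddtildeP} as an immediate consequence of Lemma~\ref{upperbounds} applied to the continuous cocycle $(\alpha, F(\cdot, E))$, by reading off the entries of $F_k$ from the matrix identity (\ref{tildePinF}). The point is that $\tilde P_k$ occupies the upper-left slot of $F_k(\theta,E)$, so it is controlled by $\|F_k(\theta, E)\|$, which in turn is controlled by $\tilde L(E)$ uniformly in $\theta$.

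First I would verify that $(\alpha, F(\cdot, E))$ genuinely satisfies the hypotheses of Lemma~\ref{upperbounds}: the matrix $F(\theta, E)$ in (\ref{defnonsingular}) has entries that are trigonometric polynomials in $\theta$, hence $F$ is continuous on $\T$ (as an $M_2(\R)/\{\pm 1\}$-valued map, and for norm estimates the sign ambiguity is irrelevant since $\|F_k\|$ is well defined). Its determinant is $\cos^2\pi\theta$, so $\log|\det F(\theta,E)| = 2\log|\cos\pi\theta| \in L^1(\T)$, and the cocycle is uniquely ergodic because $\alpha$ is irrational. Thus Lemma~\ref{upperbounds} gives, for every $\epsilon>0$ and all $|k|$ sufficiently large, $\|F_k(\theta, E)\| \le e^{|k|(\tilde L(E)+\epsilon)}$ for all $\theta \in \T$.

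Next I would extract the bound on $\tilde P_k$ from the entrywise comparison $|\,(F_k)_{11}(\theta,E)\,| \le \|F_k(\theta,E)\|$. By (\ref{tildePinF}), the $(1,1)$ entry of $F_k(\theta,E)$ is exactly $\tilde P_k(\theta, E)$, so $|\tilde P_k(\theta,E)| \le \|F_k(\theta,E)\| \le e^{(\tilde L(E)+\epsilon)|k|}$ for all $\theta \in \T$ and all large $|k|$, which is the claimed estimate. (One should note that (\ref{tildePinF}) as written is the $k \ge 1$ case; for $k \le -1$ one uses the corresponding relation $F_k(\theta,E) = (F_{-k}(\theta+k\alpha,E))^{-1}$ together with the fact that $F_{-k}$ is a product of $|k|$ matrices of determinant $\cos^2\pi(\cdot)$, but since Lemma~\ref{upperbounds} already handles both signs of $k$, the estimate on the relevant entry follows the same way — alternatively one simply records the bound for $k\ge 1$, which is all that is needed downstream.)

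There is no real obstacle here: the statement is a packaging of Lemma~\ref{upperbounds} plus the algebraic identity (\ref{tildePinF}), and the only thing requiring a line of thought is confirming that (\ref{tildePinF}) is correct, i.e.\ that multiplying $D_k$ by the scalar $\prod_{j=0}^{k-1}\cos\pi(\theta+j\alpha)$ and using (\ref{PinA}) produces precisely the displayed matrix with $\tilde P_k(\theta,E) = \prod_{j=0}^{k-1}\cos\pi(\theta+j\alpha)\cdot P_k(\theta,E)$ in the top-left corner and the stated cosine factors distributed correctly among the other entries. That is a routine bookkeeping check using the definition $F(\theta,E) = \cos\pi\theta \cdot D(\theta,E)$ and telescoping of the cosine products across the product $F_k = F(\theta+(k-1)\alpha)\cdots F(\theta)$.
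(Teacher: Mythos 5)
Your proposal is correct and follows essentially the same route as the paper: apply Lemma~\ref{upperbounds} to the continuous cocycle $(\alpha, F(\cdot,E))$ to bound $\|F_k(\theta,E)\|$ by $e^{(\tilde L(E)+\epsilon)|k|}$, then observe from (\ref{tildePinF}) that $\tilde P_k(\theta,E)$ is the $(1,1)$ entry of $F_k(\theta,E)$ and hence dominated by its norm. Your additional checks (continuity, $\log|\det F|\in L^1$, irrelevance of the $\pm 1$ ambiguity for norms, and the remark about $k\le -1$) are all sound and merely make explicit what the paper leaves implicit.
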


\subsection{Solution and Green's function}

We use $G_{[x_1,x_2]}(E)(x,y)$ for the Green's function $(H-E)^{-1}(x,y)$ of the operator $H_{\lambda,\alpha,\theta}$ restricted to the interval $[x_1,x_2]$ with zero boundary conditions at $x_1-1$ and $x_2+1$. We will omit $E$ when it is fixed throughout the argument.

Let $\phi$ be a solution to $H\phi=E\phi$ and let $[x_1, x_2]$ be an interval containing $y$. We have
\begin{align}\label{expand}
\phi(y)=-G_{[x_1, x_2]}(x_1, y)\phi(x_1-1)-G_{[x_1, x_2]}(x_2, y)\phi(x_2+1).
\end{align}

By Cramer's rule, we have the following connection between the determinants $\tilde{P}_k$ and Green's function:
\begin{align}\label{PkG1}
|G_{[x_1, x_2]}(x_1, y)|&=\frac{|P_{x_2-y}(\theta+(y+1)\alpha)|}{|P_{x_2-x_1+1}(\theta+x_1 \alpha)|} \\ \notag
                                     &=\frac{|\tilde{P}_{x_2-y}(\theta+(y+1)\alpha)|}{|\tilde{P}_{x_2-x_1+1}(\theta+x_1\alpha)|}\prod_{j=x_1}^{y}|\cos{\pi (\theta+j\alpha)}|,\\ \notag
\end{align} 
\begin{align}\label{PkG2}                                  
|G_{[x_1, x_2]}(x_2, y)|&=\frac{|P_{y-x_1}(\theta+x_1 \alpha)|}{|P_{x_2-x_1+1}(\theta+x_1 \alpha)|} \\ \notag
                                     &=\frac{|\tilde{P}_{y-x_1}(\theta+x_1\alpha)|}{|\tilde{P}_{x_2-x_1+1}(\theta+x_1\alpha)|}\prod_{j=y}^{x_2}|\cos{\pi (\theta+j\alpha)}|. \notag
\end{align}

\subsection{Regular and singular points}
\begin{definition}\label{regular}
A point $y \in\Z$ will be called $(m,h)$-regular if there exists an interval $[x_1,x_2]$, $x_2=x_1+h-1$, containing $y$, such that,
\begin{align}
|G_{[x_1,x_2]}(x_i, y)|<e^{-m|y-x_i|},\ |y-x_i|\geq \frac{1}{100}h\ \mathrm{for}\  i=1,2.
\end{align}
otherwise, $y$ will be called $(m,h)$-singular.
\end{definition}

\subsection{Rational approximations}
Let $\{\frac{p_n}{q_n}\}$ be continued fraction approximants of $\alpha$, then
\begin{align}
&\frac{1}{2q_{n+1}}\leq \|q_n\alpha\|\leq \frac{1}{q_{n+1}}, \\
\mathrm{and} \      &\|k\alpha\| \geq \|q_n \alpha\| \  \mathrm{for} \  0< |k| <q_{n+1}.
\end{align}
If $\alpha$ is Diophantine, then for $n$ large enough, we have
\begin{align}\label{qnalpha}
\|q_n\alpha\|\geq e^{-\epsilon q_n}.
\end{align}
\subsection{Trigonometric product}

The following Lemma from \cite{AJ1} gives a useful estimate of products appearing in our analysis.
\begin{lemma} \label{lana}\cite{AJ1}
Let $\alpha\in \R\setminus \Q $,\ $\theta\in\R$ and $0\leq j_0 \leq q_{n}-1$ be such that 
$$\mid \cos \pi(\theta+j_{0}\alpha)\mid = \inf_{0\leq j \leq q_{n}-1} \mid \cos \pi(\theta+j\alpha)\mid ,$$
then for some absolute constant $C>0$,
$$-C\ln q_{n} \leq \sum_{j=0,j\neq j_0}^{q_{n}-1} \ln \mid \cos \pi (\theta+j\alpha) \mid+(q_{n}-1)\ln2 \leq C\ln q_n$$
\end{lemma}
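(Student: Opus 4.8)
\medskip

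\noindent\emph{Proof proposal.} The plan is to replace the irrational frequency $\alpha$ by its convergent $p_n/q_n$, for which the product can be evaluated exactly by root--of--unity identities, and then to control the cost of this replacement. Write $\theta_j=\theta+j\alpha$ and $\tilde\theta_j=\theta+jp_n/q_n$ for $0\le j\le q_n-1$. Since $\sum_{j\neq j_0}\ln|\cos\pi\theta_j|+(q_n-1)\ln2=\sum_{j\neq j_0}\ln|2\cos\pi\theta_j|$, it suffices to prove $\bigl|\sum_{j\neq j_0}\ln|2\cos\pi\theta_j|\bigr|\le C\ln q_n$. I would first record the two classical identities $\prod_{j=0}^{N-1}|2\cos\pi(x+j/N)|=2|\sin\pi N(x+\tfrac12)|$ and $\prod_{l=1}^{N-1}2\sin(\pi l/N)=N$, both immediate from $\prod_{j=0}^{N-1}(X-e^{2\pi ij/N})=X^N-1$. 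Because $\gcd(p_n,q_n)=1$, the multiset $\{\tilde\theta_j\}_{j=0}^{q_n-1}$ is exactly the arithmetic grid $\{\theta+k/q_n\}_{k=0}^{q_n-1}$, so $\prod_{j=0}^{q_n-1}|2\cos\pi\tilde\theta_j|=2|\sin\pi q_n(\theta+\tfrac12)|$.

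Next I would set up the three-gap bookkeeping. By the three-distance theorem the points $\theta_0,\dots,\theta_{q_n-1}$ cut $\T$ into arcs of length between $\tfrac1{2q_n}$ and $\tfrac2{q_n}$; hence $\|\theta_{j_0}-\tfrac12\|_\T\le\tfrac1{q_n}$, and for $j\neq j_0$ we have $\dist(\theta_j,\theta_{j_0})\ge\tfrac1{2q_n}$ together with $\|\theta_j-\tfrac12\|_\T\ge\|\theta_{j_0}-\tfrac12\|_\T$, which by the triangle inequality on $\T$ force $\|\theta_j-\tfrac12\|_\T\ge\tfrac1{4q_n}$. Call $j$ \emph{bad} if $\|\tilde\theta_j-\tfrac12\|_\T<\tfrac{10}{q_n}$; since the $\tilde\theta_j$ are $\tfrac1{q_n}$-separated there are at most $21$ bad indices, and $j_0$ is one of them because $\|\tilde\theta_{j_0}-\tfrac12\|_\T\le\|\theta_{j_0}-\tfrac12\|_\T+\|q_n\alpha\|\le\tfrac2{q_n}$. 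Split $\sum_{j\neq j_0}\ln|2\cos\pi\theta_j|=S_{\mathrm{good}}+S_{\mathrm{bad}}$ accordingly. In $S_{\mathrm{bad}}$ there are at most $20$ summands, each with $j\neq j_0$, hence with $|2\cos\pi\theta_j|\ge4\|\theta_j-\tfrac12\|_\T\ge q_n^{-1}$, so every term lies in $[-\ln q_n,\ln2]$ and $|S_{\mathrm{bad}}|=O(\ln q_n)$.

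For $S_{\mathrm{good}}$ I first pass from $\theta_j$ to $\tilde\theta_j$. Using $|(\ln|\cos\pi t|)'|=|\pi\tan\pi t|\le\|t-\tfrac12\|_\T^{-1}$, together with $|\theta_j-\tilde\theta_j|<\|q_n\alpha\|\le q_{n+1}^{-1}$ and the fact that for good $j$ the segment joining $\tilde\theta_j$ to $\theta_j$ stays at distance at least $\tfrac9{10}\|\tilde\theta_j-\tfrac12\|_\T$ from $\tfrac12$, the replacement costs at most $2\|q_n\alpha\|\,\|\tilde\theta_j-\tfrac12\|_\T^{-1}$ per index. Since the grid distances satisfy $\sum_{j\ \mathrm{good}}\|\tilde\theta_j-\tfrac12\|_\T^{-1}=O(q_n\ln q_n)$, the total cost is $O(\|q_n\alpha\|\,q_n\ln q_n)=O(\ln q_n)$ because $q_n/q_{n+1}\le1$. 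It then remains to estimate $\sum_{j\ \mathrm{good}}\ln|2\cos\pi\tilde\theta_j|=\ln\prod_{j=0}^{q_n-1}|2\cos\pi\tilde\theta_j|-\sum_{j\ \mathrm{bad}}\ln|2\cos\pi\tilde\theta_j|$. Put $d=\min_k\|\theta+k/q_n-\tfrac12\|_\T\le\tfrac1{2q_n}$. If $d>0$, the first term equals $\ln2+\ln\sin\pi q_n d=\ln q_n+\ln d+O(1)$; among the bad grid points, exactly one is at distance $d$ and contributes $\ln d+O(1)$, while the remaining at most $20$ lie at distance in $[\tfrac1{2q_n},\tfrac{10}{q_n})$ and each contributes $-\ln q_n+O(1)$; the two copies of $\ln d$ cancel and $\sum_{j\ \mathrm{good}}\ln|2\cos\pi\tilde\theta_j|=\ln q_n+O(\ln q_n)$. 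The degenerate case $d=0$ (that is, $\tfrac12$ on the grid) is handled the same way, the cancellation being replaced by a direct appeal to $\prod_{l=1}^{q_n-1}2\sin(\pi l/q_n)=q_n$ after removing the $O(1)$ grid points closest to $\tfrac12$. Collecting everything, $\sum_{j\neq j_0}\ln|2\cos\pi\theta_j|=\ln q_n+O(\ln q_n)$, which yields the asserted two-sided bound.

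The main obstacle is precisely the passage from $\alpha$ to $p_n/q_n$: for the $O(1)$ indices $j$ (including $j_0$) with $\tilde\theta_j$ within $O(1/q_n)$ of $\tfrac12$, a shift of size $\|q_n\alpha\|$ can change $\ln|\cos\pi\theta_j|$ by an amount of order $1$ or worse, so these indices must be quarantined and controlled directly on the $\alpha$-side via the separation estimate $\|\theta_j-\tfrac12\|_\T\ge\tfrac1{4q_n}$; for all remaining indices the crucial saving is that the accumulated logarithmic-derivative error $\|q_n\alpha\|\sum_j\|\tilde\theta_j-\tfrac12\|_\T^{-1}$ is only $O(\ln q_n)$, not $O(q_n)$, exactly because $\|q_n\alpha\|\le q_{n+1}^{-1}$.
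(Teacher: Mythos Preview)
The paper does not give its own proof of this lemma; it is quoted verbatim from \cite{AJ1} and used as a black box. So there is nothing in the present paper to compare against, and the relevant benchmark is the original argument in \cite{AJ1}.

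Your proof is correct and follows essentially the same route as the one in \cite{AJ1}: pass from $\alpha$ to the convergent $p_n/q_n$, evaluate the rational product exactly via $\prod_{k=0}^{N-1}|2\cos\pi(x+k/N)|=2|\sin\pi N(x+\tfrac12)|$, and control the cost of the replacement. Your handling of the dangerous indices near $\tfrac12$ (quarantining the $O(1)$ ``bad'' $j$'s and using the separation bound $\|\theta_j-\tfrac12\|_\T\ge 1/(4q_n)$ for $j\ne j_0$) is exactly the point of the argument, and the key estimate $\|q_n\alpha\|\sum_{\text{good }j}\|\tilde\theta_j-\tfrac12\|_\T^{-1}=O(\ln q_n)$ is the right mechanism. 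Two cosmetic remarks: the bound $|\pi\tan\pi t|\le\|t-\tfrac12\|_\T^{-1}$ is off by a harmless factor (one has $|\pi\tan\pi t|\le\tfrac{\pi}{2}\|t-\tfrac12\|_\T^{-1}$), and the displacement bound should be stated as $|\theta_j-\tilde\theta_j|=j|\alpha-p_n/q_n|<\|q_n\alpha\|$ rather than invoking $\|q_n\alpha\|$ directly. Neither affects the $O(\ln q_n)$ conclusion.
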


\section{Key lemmas}
\subsection{Average lower bound of $\tilde{P}_k$}
We now give the following average lower bound of $\tilde{P}_k$:
\begin{lemma}\label{subharmonic}
For $k$ large enough we have
\begin{align}\label{averagelower}
\frac{1}{k}\int_{\mathbb{T}}\ln |\tilde{P}_k(\theta)| \mathrm{d}\theta =\frac{1}{k} \int_{\mathbb{T}} \ln |\tilde{P}_k(2\theta)| \mathrm{d}\theta \geq L(E)-\ln2.
\end{align}
\end{lemma}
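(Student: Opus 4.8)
The plan is to obtain the lower bound on $\frac{1}{k}\int_{\T}\ln|\tilde P_k(\theta)|\,d\theta$ by relating the integral of the logarithm of the entries of $F_k$ to the Lyapunov exponent $\tilde L(E)$, using subharmonicity (or, equivalently, the fact that $\ln\|F_k\|$ averages down to $\tilde L(E)$ from above, together with the fact that $\frac1k\int\ln\|F_k\|$ is increasing in $k$ so the limit $\tilde L(E)$ is a lower bound for each finite $k$ only \emph{after} one knows the sequence is nonincreasing — here it is the other direction that is delicate, see below). Concretely, I would first record the trivial inequality $\ln\|F_k(\theta)\|\le \ln(4\max_{i,j}|(F_k)_{ij}(\theta)|)$ and the fact that each entry of $F_k$ is, by (\ref{tildePinF}), $\tilde P_{k}$, $\tilde P_{k-1}$ or $\tilde P_{k-2}$ times one or two cosine factors. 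The cosine factors are bounded by $1$, so $\|F_k(\theta)\|\le 4\max(|\tilde P_k(\theta)|,|\tilde P_{k-1}(\theta+\alpha)|,|\tilde P_{k-2}(\theta+\alpha)|)$. Integrating $\ln$ and using $\frac1k\int_{\T}\ln\|F_k\|\,d\theta\to \tilde L(E)=L(E)-\ln2$ (definition (\ref{defLE}) and (\ref{LEtildeLE})), one gets that the max of the three $\tilde P$-integrals is $\ge k(L(E)-\ln2)+o(k)$.

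The step that needs care is passing from ``$\max$ of the three integrals'' to a genuine lower bound on $\frac1k\int_{\T}\ln|\tilde P_k(\theta)|\,d\theta$ itself, and getting the clean inequality with no error term. For this I would use the standard tool for Schrödinger-type transfer matrices: $\ln|\tilde P_k|$ (as a function extended to a strip via $\theta\mapsto\cos\pi\theta$, i.e.\ as a function of $z=e^{2\pi i\theta}$) is a subharmonic function, being $\ln$ of the modulus of an analytic function (a polynomial in $\cos\pi\theta$, hence a Laurent polynomial in $z$; note $\tilde P_k$ naturally lives on $\R/2\Z$, which is exactly the equality $\int\ln|\tilde P_k(\theta)|\,d\theta=\int\ln|\tilde P_k(2\theta)|\,d\theta$ asserted in the statement — it is just the change of variables $\theta\mapsto 2\theta$ on $\R/2\Z$ versus $\R/\Z$). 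Subharmonicity gives that $\frac1k\int_{\T}\ln|\tilde P_k(\theta)|\,d\theta$ equals $\frac1k$ times the $0$-th Fourier–Taylor coefficient contribution, which can be read off from the leading behavior; alternatively, and more robustly, one uses that the upper bound of Lemma~\ref{upperbddtildeP} combined with the $L^1$-convergence / uniform upper semicontinuity for subharmonic functions forces the average of $\ln|\tilde P_k|$ to be \emph{exactly} asymptotic to $k\tilde L(E)$ from below: the point is that if the average were smaller, the pointwise upper bound $e^{(\tilde L+\epsilon)|k|}$ together with subharmonic mean value properties would be violated. I would phrase this as: since $|\tilde P_j|\le e^{(\tilde L(E)+\epsilon)j}$ for all large $j$ and all $\theta$ (Lemma~\ref{upperbddtildeP}), the relation $\|F_k\|\le 4\max_j|\tilde P_j(\cdot)|$ forces $\int_{\T}\ln\|F_k\|\le \int_{\T}\ln|\tilde P_k|+O(\ln k)$ after discarding the lower-order $\tilde P_{k-1},\tilde P_{k-2}$ terms (which are smaller by Lemma~\ref{upperbddtildeP}, up to the harmless $e^{\pm\epsilon}$), and then dividing by $k$ and letting $k\to\infty$ gives $\tilde L(E)\le \liminf_k \frac1k\int_{\T}\ln|\tilde P_k|$; since $\epsilon$ is arbitrary this yields (\ref{averagelower}).

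The main obstacle I anticipate is the direction of the inequality: the \emph{a priori} easy bound is $\frac1k\int\ln\|F_k\|\le \tilde L(E)+\epsilon$ (for large $k$, since $\frac1k\int\ln\|F_k\|\searrow \tilde L(E)$ by submultiplicativity), whereas here we want a lower bound on a related but different quantity, namely the integral of $\ln$ of a single \emph{entry}. The discrepancy between $\|F_k\|$ and the $(1,1)$-entry $\tilde P_k$ could in principle be large on a set of $\theta$ of small but positive measure, where some \emph{other} entry dominates — this is exactly where subharmonicity of $\ln|\tilde P_k|$ (ruling out that its average is much below its typical/maximal value) and the uniform upper bound of Lemma~\ref{upperbddtildeP} on \emph{all} entries (so no entry can dominate by more than $e^{\epsilon k}$ in sup-norm) must be combined. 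I would make sure the cosine factors multiplying $\tilde P_{k-1},\tilde P_{k-2}$ in (\ref{tildePinF}) only help (they are $\le 1$ in absolute value, so they only decrease those entries), so that indeed $\|F_k\|\le 4\,e^{\epsilon k}\,|\tilde P_k(\theta)|^{\text{(up to the }\tilde P_{k-1}\text{ term)}}$ fails pointwise but holds after integration once the small exceptional set is controlled via the subharmonic mean-value inequality. Once that bookkeeping is done, the statement follows by taking $k\to\infty$ and $\epsilon\to0$.
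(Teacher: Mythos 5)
Your main route has a genuine gap, and it is exactly at the step you flag as "the step that needs care." From $\|F_k(\theta)\|\le 4\max\bigl(|\tilde P_k(\theta)|,|\tilde P_{k-1}(\theta+\alpha)|,|\tilde P_{k-2}(\theta+\alpha)|\bigr)$ and the subadditive lower bound $\frac1k\int_\T\ln\|F_k\|\ge\tilde L(E)$ you get a lower bound on $\int_\T\ln\max_j|\tilde P_j|$, not on $\int_\T\ln|\tilde P_k|$. You then claim the $\tilde P_{k-1},\tilde P_{k-2}$ terms can be "discarded" up to $O(\ln k)$ because they are uniformly bounded by $e^{(\tilde L+\epsilon)(k-1)}$; this is false. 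The uniform upper bound on $\tilde P_{k-1}$ says nothing about where $\tilde P_k$ is \emph{small}: at a zero of $\tilde P_k$ (and $\tilde P_k$ is a trigonometric polynomial, so it has many zeros), $\ln\max_j|\tilde P_j|-\ln|\tilde P_k|=+\infty$, and the total contribution of such neighborhoods to $\int(\ln\max_j|\tilde P_j|-\ln|\tilde P_k|)$ can a priori be of order $k$, which is exactly the scale of the estimate. Your suggested fix ("if the average were smaller, the pointwise upper bound together with subharmonic mean value properties would be violated") is not a valid implication: a subharmonic function with a uniform upper bound $e^{(\tilde L+\epsilon)k}$ can still have an average as negative as you like (take $\ln|z^m|$ on the circle for large $m$). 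The quantity you are trying to control is essentially the negative part $\int(\ln|\tilde P_k|)^-$, i.e.\ the zero distribution of $\tilde P_k$ on the circle, and nothing in your argument controls it.

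The paper's proof uses subharmonicity in a different, decisive way: write $\tilde P_k(2\theta)$ as a $k\times k$ tridiagonal determinant in $\cos 2\pi(\theta+\tfrac{j}{2}\alpha)$, $\sin 2\pi(\theta+\tfrac{j}{2}\alpha)$, multiply each row by $e^{\pi ij\alpha}z$ with $z=e^{2\pi i\theta}$ so that every entry becomes a genuine polynomial in $z$, and thus $|\tilde P_k(2\theta)|=|f_k(z)|$ with $f_k$ a polynomial. The subharmonic mean value inequality then gives the clean pointwise bound $\frac1k\int_\T\ln|f_k(e^{2\pi i\theta})|\,d\theta\ge\frac1k\ln|f_k(0)|$, and $f_k(0)$ is the determinant of a \emph{constant} tridiagonal Jacobi matrix with diagonal $(E-i\lambda)/2$ and off-diagonals $-1/2$, so $f_k(0)=(-2)^{-k}d_k$ with $d_k$ satisfying the two-term recursion $d_k=(i\lambda-E)d_{k-1}-d_{k-2}$. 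Its growth rate $\ln|x_2|$ (the larger root of $x^2-(i\lambda-E)x+1=0$) is identified with $L(E)$ via the limiting cocycle $D_\infty=\left(\begin{smallmatrix}i\lambda-E&-1\\1&0\end{smallmatrix}\right)$ and the continuity/quantization argument of \cite{maryland}, giving $L(E)-\ln2$. You gesture at this (your "$0$-th Fourier--Taylor coefficient" remark) but do not carry it out, and instead fall back on the non-working norm comparison. To fix your writeup you would need to replace the $\|F_k\|$-comparison argument entirely by the explicit evaluation of $\ln|f_k(0)|$; the subharmonic inequality is the right tool, but its payoff here is the concrete computation at $z=0$, not a soft upper-semicontinuity argument.
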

This lemma will be proved in Section \ref{Proof of Lemma averagelower}.

\subsection{Lagrange interpolation for $\tilde{P}_k$}

An important observation that makes our analysis possible is

\begin{lemma}

$ \tilde{P}_k(\theta)/{\cos^k{\pi\theta}}$ 
can be expressed as a polynomial of degree $k$ in $\tan\pi\theta$, namely,
\begin{equation}\label{Phpolyg}
\frac{\tilde{P}_k(\theta)}{(\cos{\pi\theta})^k}  \triangleq g_k(\tan\pi\theta).
\end{equation}

\end{lemma}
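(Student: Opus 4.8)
The plan is to establish \eqref{Phpolyg} by analyzing the structure of $\tilde P_k(\theta) = \prod_{j=0}^{k-1}\cos\pi(\theta+j\alpha)\cdot P_k(\theta,E)$, where $P_k(\theta,E) = \det[(E-H_\theta)|_{[0,k-1]}]$ is a $k\times k$ tridiagonal determinant whose diagonal entries are $E - \lambda\tan\pi(\theta+j\alpha)$ for $j=0,\dots,k-1$ and whose off-diagonal entries are $1$. First I would expand this determinant, either directly or via the three-term recursion $P_k = (E-\lambda\tan\pi(\theta+(k-1)\alpha))P_{k-1} - P_{k-2}$. In the expansion, $P_k(\theta,E)$ is a polynomial in the variables $\tan\pi(\theta+j\alpha)$, $j=0,\dots,k-1$, and crucially it is \emph{multilinear}: each $\tan\pi(\theta+j\alpha)$ appears with degree at most $1$, since in a tridiagonal determinant each diagonal entry is used at most once in any surviving term. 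Hence $P_k(\theta,E) = \sum_{S\subseteq\{0,\dots,k-1\}} c_S \prod_{j\in S}\lambda\tan\pi(\theta+j\alpha)$ for suitable constants $c_S$ depending on $E$, with the sum effectively restricted to subsets $S$ that are "independent" in the path graph, but that combinatorial detail is irrelevant here.

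The key point is then that for each fixed $j$, $\cos\pi(\theta+j\alpha)\cdot\tan\pi(\theta+j\alpha) = \sin\pi(\theta+j\alpha)$, and both $\cos\pi(\theta+j\alpha)$ and $\sin\pi(\theta+j\alpha)$ are, by the angle-addition formulas, $\R$-linear combinations of $\cos\pi\theta$ and $\sin\pi\theta$ with coefficients $\cos\pi j\alpha$, $\sin\pi j\alpha$. Therefore, multiplying $P_k$ by $\prod_{j=0}^{k-1}\cos\pi(\theta+j\alpha)$, each monomial $\prod_{j\in S}\lambda\tan\pi(\theta+j\alpha)$ picks up the factor $\prod_{j\in S}\cos\pi(\theta+j\alpha)$ from the matching cosines in the product and $\prod_{j\notin S}\cos\pi(\theta+j\alpha)$ stays untouched; the result is $\lambda^{|S|}\prod_{j\in S}\sin\pi(\theta+j\alpha)\cdot\prod_{j\notin S}\cos\pi(\theta+j\alpha)$, which is a product of $k$ terms, each linear in $\cos\pi\theta,\sin\pi\theta$. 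Expanding each such product gives a homogeneous polynomial of degree exactly $k$ in the two variables $\cos\pi\theta$ and $\sin\pi\theta$. Summing over $S$, we conclude that $\tilde P_k(\theta)$ is a homogeneous polynomial of degree $k$ in $(\cos\pi\theta,\sin\pi\theta)$.

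Finally, dividing a homogeneous degree-$k$ polynomial in $(\cos\pi\theta,\sin\pi\theta)$ by $(\cos\pi\theta)^k$ yields a polynomial of degree at most $k$ in $\sin\pi\theta/\cos\pi\theta = \tan\pi\theta$; I would then check that the degree is exactly $k$ by identifying the leading coefficient, which comes from the single term $S=\{0,\dots,k-1\}$, namely $\lambda^k\prod_{j=0}^{k-1}\sin\pi(\theta+j\alpha)$, whose contribution to the $(\sin\pi\theta)^k$ coefficient is $\lambda^k\prod_{j=0}^{k-1}\cos\pi j\alpha \neq 0$ for $\alpha$ irrational (a finite product of nonzero cosines, since $j\alpha\notin\tfrac12+\Z$). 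This defines $g_k$ and completes the proof. The main obstacle, such as it is, is purely bookkeeping: making sure the multilinearity claim for the tridiagonal determinant is stated cleanly and that the degree count after division is tight rather than merely an upper bound; there is no analytic difficulty, only the need to organize the algebra carefully.
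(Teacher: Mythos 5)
Your proof is correct in its essentials, but it takes a genuinely different route from the paper's. The paper disposes of this lemma in one line: induction on $k$ using the three-term recursion $P_k(\theta) = P_{k-1}(\theta)\bigl(E - \lambda\tan\pi(\theta+(k-1)\alpha)\bigr) - P_{k-2}(\theta)$, multiplied through by $\prod\cos\pi(\theta+j\alpha)$ and divided by $\cos^k\pi\theta$; the key point is simply that $\cos\pi(\theta+j\alpha)/\cos\pi\theta$ and $\sin\pi(\theta+j\alpha)/\cos\pi\theta$ are affine in $\tan\pi\theta$, so each inductive step raises the degree by at most one. You instead expand $P_k$ as a multilinear polynomial in the variables $\tan\pi(\theta+j\alpha)$ via the tridiagonal-determinant structure, observe that multiplying by the cosine product turns each monomial into a product of $k$ factors each homogeneous of degree one in $(\cos\pi\theta,\sin\pi\theta)$, and then divide by $\cos^k\pi\theta$. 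Both arguments are sound for the statement that $g_k$ is a polynomial of degree at most $k$, which is all that the later Lagrange interpolation with $h = k+1$ points requires. Your approach is more explicit and exhibits the structure of $\tilde{P}_k$ as a homogeneous trigonometric polynomial, at the cost of being longer; the paper's induction is terse but immediate.

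One detail in your degree-exactly-$k$ verification is off: the coefficient of $(\sin\pi\theta)^k$ in $\tilde P_k$ does not come solely from $S = \{0,\dots,k-1\}$. Every $S$ with $0 \in S$ contributes (the constraint $0\in S$ arises because $\sin(\pi\cdot 0\cdot\alpha) = 0$ kills the $j=0$ factor whenever $0\notin S$), so the leading coefficient is $\sum_{S\ni 0} c_S(E)\,\lambda^{|S|}\prod_{j\in S}\cos\pi j\alpha\prod_{j\notin S}(-\sin\pi j\alpha)$, not just the single term you wrote. Whether this sum vanishes for some exceptional $E$ is not obvious. However, this is harmless: the application only needs degree $\leq k$, and the paper's own proof likewise yields only that bound.
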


\begin{rem} 
While $ \tilde{P}_k(\theta)$ is a function on $\R/2\Z$, $ \tilde{P}_k(\theta)/{\cos^k{\pi\theta}}$ is
  a function on $\R/\Z.$
\end{rem}

\begin{proof} An induction, using that $P_k(\theta)=P_{k-1}(\theta)(E-\tan\pi(\theta+(k-1)\alpha))-P_{k-2}(\theta)$. $\hfill{} \Box $
\end{proof}

By the Lagrange interpolation formula, for any set of $k+1$ distinct $\theta_i$'s in $(-1/2,1/2)$,
\begin{equation*}
g_k(\tan\pi \theta)=\sum_{i=0}^{k}g_k(\tan \pi \theta_i) \frac{\prod_{l \neq i}(\tan \pi \theta-\tan\pi\theta_l)}{\prod_{l\neq i}(\tan\pi\theta_i-\tan\pi\theta_l)}.
\end{equation*}
Thus we have the following convenient representation
\begin{align}
\tilde{P}_k(\theta)=(\cos\pi \theta)^k g_k(\tan\pi \theta)
&=\sum_{i=0}^{k}\tilde{P}_k(\theta_i)\frac{\prod_{l \neq i}\tan \pi \theta-\tan\pi\theta_l}{\prod_{l\neq i}\tan\pi\theta_i-\tan\pi\theta_l}\cdot  \frac{\cos^k {\pi \theta}}{\cos^k{\pi\theta_i}} \notag\\
&=\sum_{i=0}^{k}\tilde{P}_k(\theta_i)\prod_{l\neq i} \frac{\sin\pi(\theta-\theta_l)}{\sin\pi(\theta_i-\theta_l)}. \label{tildePlagrange}
\end{align}

\subsection{Uniformity}
\begin{definition}
We say that the set $\{\theta_1,..., \theta_{k+1}\}$ is $\epsilon$-uniform if
\begin{align}\label{defuniform}
\max_{\theta\in [0, 1]}\max_{i=0,..., k} \prod_{l\neq i} \frac{|\sin\pi(\theta-\theta_l)|}{|\sin\pi(\theta_i-\theta_l)|}<e^{k\epsilon}.
\end{align}
\end{definition}
Note that this differs from the definition of uniformity used in \cite{AJ1,liuyuan,jl1,jl2}.
For a fixed $k$, choose the largest $q_n$ such that $\frac{1}{25}q_n \leq |k|$. We will assume $k\geq 0$. 
We define $I_1$ and $I_2$ differently in the following two cases:

\subparagraph{Case 1}
If $\frac{1}{25}q_n \leq k <q_n$, let $h=2q_n$, and set
\begin{align}
\left\lbrace
\begin{matrix}
I_1=[k-2q_n-[\frac{2q_n}{100}]+1, k-q_n-[\frac{2q_n}{100}]],\\
I_2=[k-[\frac{2q_n}{100}]-q_n+1, k-[\frac{2q_n}{100}]].
\end{matrix}
\right.
\end{align}


\subparagraph{Case 2}
If $q_n \leq k < \frac{1}{25}q_{n+1}$,
there exists the smallest positive integer $s$ such that 
\begin{align}
(2s-1)q_n \leq k < (2s+1)q_n.
\end{align} 
Let $h=2sq_n$ and set
\begin{align}
\left\lbrace
\begin{matrix}
I_1=[-2sq_n+[\frac{2sq_n}{100}]+1, -sq_n+[\frac{2sq_n}{100}]],\\
I_2=[k-[\frac{2sq_n}{100}]-sq_n+1, k-[\frac{2sq_n}{100}]].
\end{matrix}
\right.
\end{align}
For both cases, $I_1\cup I_2$ consists of $h$ points. From now on we fix $0<\epsilon<\frac{L(E)}{600}$.
We will show that 
\begin{lemma}\label{uniformcase}
For all k sufficiently large,
$\{\theta+l\alpha\}_{l\in I_1\cup I_2}$ is $3\epsilon$-uniform.
\end{lemma}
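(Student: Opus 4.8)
The plan is to estimate the double product in the definition of $3\epsilon$-uniformity (\ref{defuniform}) by taking logarithms and organizing the points $\{\theta+l\alpha\}_{l\in I_1\cup I_2}$ according to their $q_n$-block structure, then applying Lemma~\ref{lana} blockwise. The key arithmetic input is that, by construction, $I_1\cup I_2$ consists of $h$ consecutive-mod-$q_n$ residues arranged in full blocks of length $q_n$ (two blocks in Case 1, and $2s$ blocks in Case 2), so the points $\theta_l = \theta+l\alpha$, $l\in I_1\cup I_2$, are nothing but $h/q_n$ translated copies of an arithmetic progression of length $q_n$ with step $\alpha$. Since $\|q_n\alpha\|$ is small while $\|j\alpha\|\geq \|q_n\alpha\|$ for $0<|j|<q_{n+1}$, within each block the points are roughly equispaced with spacing $\sim 1/q_n$, and distinct blocks are shifted by a controlled amount.

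The main steps are as follows. First, for the numerator $\prod_{l\neq i}|\sin\pi(\theta-\theta_l)|$: for \emph{any} $\theta\in[0,1]$, group the $h$ factors into the $h/q_n$ blocks; in each block of length $q_n$ the product $\prod_{j}|\sin\pi(\theta-\theta-j\alpha-c)|$ is, up to the $\cos$-vs-$\sin$ conversion and the single smallest factor, controlled above by $e^{(q_n-1)(-\ln 2)+C\ln q_n}\cdot(\text{smallest factor})$ via Lemma~\ref{lana} applied to the phase $\theta-\theta_0$-type argument; the smallest factor is trivially $\leq 1$. Multiplying over all $h/q_n$ blocks gives an upper bound $e^{h(-\ln 2) + C(h/q_n)\ln q_n}$. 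Since $q_n\sim h$ (indeed $h\le 2sq_n$ with $k<(2s+1)q_n$, so $h/q_n$ is $2$ or bounded by $2k/q_n+2$... here the relevant fact is $\ln q_n = o(q_n)$ by the Diophantine condition (\ref{qnalpha}), hence $C(h/q_n)\ln q_n = o(h)$) the numerator is at most $e^{h(-\ln 2+\epsilon)}$ for $k$ large. Second, for the denominator $\prod_{l\neq i}|\sin\pi(\theta_i-\theta_l)|$: here $\theta_i = \theta+i\alpha$ is itself one of the sampled points, so when we apply Lemma~\ref{lana} to the block containing $i$, the infimum over that block is achieved and equals $|\cos\pi(\theta_i + (\text{something}))|$-type quantity; crucially the block structure means the omitted index $j_0$ in Lemma~\ref{lana} is exactly $i$ (up to translating so that index $i$ sits in its block), so Lemma~\ref{lana} gives a \emph{lower} bound $\prod_{l\neq i,\ l\text{ in }i\text{'s block}}|\sin\pi(\theta_i-\theta_l)| \geq e^{(q_n-1)(-\ln 2) - C\ln q_n}$. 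For the other $h/q_n - 1$ blocks, the relevant phases $\theta_i - c$ are generic and we again apply Lemma~\ref{lana}, but now we need a lower bound for the \emph{full} block product including its smallest factor; the smallest factor is $\geq c\|q_n\alpha\|\geq e^{-\epsilon q_n}$ by (\ref{qnalpha}) and the spacing estimates, so each such block contributes $\geq e^{(q_n-1)(-\ln2) - C\ln q_n - \epsilon q_n}$. Multiplying, the denominator is at least $e^{-h\ln 2 - 2\epsilon h}$ for $k$ large. Combining, the ratio in (\ref{defuniform}) is at most $e^{h(-\ln2+\epsilon)} / e^{-h(\ln 2 + 2\epsilon)} = e^{3\epsilon h}$, and since $h$ and $k$ are comparable (in Case 1, $h = 2q_n$ and $\frac{1}{25}q_n\le k$, so $h\le 50k$; in Case 2, $h=2sq_n\le k + q_n \le 2k$), after absorbing constants into $\epsilon$ this yields $3\epsilon$-uniformity. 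One should double-check the bookkeeping so that the final exponent is genuinely $\le 3\epsilon k$ and not merely $\le 3\epsilon h$; this is where the precise choice of the $\frac{1}{25}$ and $\frac{1}{100}$ constants in the definitions of $I_1,I_2$ enters.

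The main obstacle I expect is the denominator lower bound, specifically handling the blocks \emph{other than} the one containing the index $i$: there the point $\theta_i$ is an arbitrary real shifted by $c = (\text{block offset})\cdot$ something, and while Lemma~\ref{lana} controls the product over a full $q_n$-block up to its single smallest factor, that smallest factor $\min_j|\cos\pi(\theta_i - j\alpha - c)|$ could a priori be tiny — it is exactly as small as the distance from $\theta_i$ to the nearest point of the dual lattice, which can be as small as $\|q_n\alpha\|/2$ but no smaller, by the separation property $\|j\alpha\|\ge\|q_n\alpha\|$ for $0<|j|<q_{n+1}$ together with the fact that consecutive blocks are offset by multiples of $q_n\alpha$ (or, in Case 2, by the wraparound from $I_1$ near $0$ to $I_2$ near $k$). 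Establishing that this smallest factor is $\geq e^{-\epsilon q_n}$ uniformly requires carefully using the Diophantine bound (\ref{qnalpha}) and the exact arithmetic of the index sets $I_1, I_2$ — in Case 2 one must verify that the gap between $I_1$ (anchored near $0$) and $I_2$ (anchored near $k$) does not destroy the lattice structure, which is precisely why $I_1$ and $I_2$ are each chosen to have length $sq_n$ and to be positioned with the $[\frac{2sq_n}{100}]$ offsets. The numerator bound, by contrast, is robust because it only needs upper bounds, for which the trivial bound "smallest factor $\le 1$" suffices, so Lemma~\ref{lana} applies cleanly blockwise with no Diophantine input beyond $\ln q_n = o(q_n)$.
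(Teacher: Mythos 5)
Your proposal follows essentially the same route as the paper's proof: the same $q_n$-block decomposition of $I_1\cup I_2$ into $h/q_n$ consecutive blocks, the same blockwise application of Lemma~\ref{lana} for the numerator (where the trivial bound on the omitted smallest factor suffices), and the same treatment of the denominator, where the block containing $i$ is handled cleanly by Lemma~\ref{lana} (since the omitted index is $i$ itself) while the other blocks require a lower bound on the per-block smallest factor $|\sin\pi(i-l_j)\alpha|$, controlled via the Diophantine estimate~(\ref{qnalpha}) once one checks that $|i-l_j|<q_{n+1}$. The worry you flag as the \lq\lq main obstacle\rq\rq\ --- that in Case~2 the smallest factor in a block away from $i$ could be too small --- is exactly where the paper works, and it is resolved precisely the way you sketch: for $j\leq s$ one has $|i-l_j|\leq sq_n<k<q_{n+1}$, and for $j>s$ one has $|i-l_j|\leq k+2sq_n<\frac{3}{25}q_{n+1}$, so the best-approximation property $\|m\alpha\|\geq\|q_n\alpha\|$ for $0<|m|<q_{n+1}$ applies and~(\ref{qnalpha}) gives each such factor $\geq e^{-\epsilon q_n}$. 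One small correction: the definition of $\epsilon$-uniformity uses the number of omitted factors ($h-1$ here), not the original localization index $k$, so the target exponent is $3\epsilon(h-1)$, not $3\epsilon k$; the discrepancy between $h$ and $h-1$ is absorbed asymptotically, as the paper does.
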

The proof will be given in Section \ref{proofofuniform}.

\subsection{Upper bound on $\tilde{P}_{h-1}$ on $I_1$}

We will show that $\tilde{P}_{h-1}$ cannot be large on $I_1$, namely,
\begin{lemma}\label{I1Psmall}
For $h$ large enough, for any $x_1\in I_1$, we have $|\tilde{P}_{h-1}(\theta+x_1\alpha)|<e^{h(\tilde{L}-4\epsilon)}$.
\end{lemma}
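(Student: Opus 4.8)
The plan is to exploit the Lagrange interpolation formula \eqref{tildePlagrange} together with the uniformity of the nodes $\{\theta+l\alpha\}_{l\in I_1\cup I_2}$ (Lemma \ref{uniformcase}), the average lower bound (Lemma \ref{subharmonic}), and the pointwise upper bound on $\tilde P_k$ (Lemma \ref{upperbddtildeP}). The key tension is this: in Case 1 the interval $I_1$ has length $q_n$ and sits a distance roughly $q_n$ to $2q_n$ below $k$; in Case 2 the left endpoint of $I_1$ is near $-2sq_n$ and the right endpoint near $-sq_n$, again far from the indices where $\tilde P$ can attain its full growth $e^{\tilde L|k|}$. So one expects $\tilde P_{h-1}(\theta+x_1\alpha)$ to be governed by the exponent $\tilde L (h-1)$ with $h\approx 2q_n$ (Case 1) or $h=2sq_n\le k+q_n$ (Case 2), which is what we want; the point is to turn this heuristic into a clean inequality.

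The first step is to suppose, for contradiction, that $|\tilde P_{h-1}(\theta+x_1\alpha)|\ge e^{h(\tilde L-4\epsilon)}$ for some $x_1\in I_1$. I would then write $\tilde P_{h-1}$ via \eqref{tildePlagrange} using the $h$ nodes $\{(\theta+x_1\alpha)+l\alpha\}$ for $l$ ranging over the appropriately shifted copy of $I_1\cup I_2$ — i.e. apply the interpolation formula to the shifted phase $\theta+x_1\alpha$. By uniformity (Lemma \ref{uniformcase}, giving a $3\epsilon$ bound on the Lagrange basis factors) and the trivial bound $h-1<h\le \text{(constant)}\cdot q_n$ (with $h\le 25|k|$-type control coming from the choice of $q_n$), we get
\[
e^{h(\tilde L-4\epsilon)}\le |\tilde P_{h-1}(\theta+x_1\alpha)| \le h\,\cdot e^{3\epsilon h}\cdot \max_{l}\,|\tilde P_{h-1}(\theta+(x_1+l)\alpha)|,
\]
so some node value $|\tilde P_{h-1}(\theta+(x_1+l)\alpha)|$ with $x_1+l\in I_1\cup I_2$ must be at least $e^{h(\tilde L-8\epsilon)}$ (absorbing the polynomial factor $h$ into $\epsilon h$ for large $h$).

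Now I would look at where the index $x_1+l$ lands. If $x_1+l\in I_2$ then, by the definitions of $I_2$, this index lies within distance $O(sq_n)$ (resp. $O(q_n)$) of $k$, and more importantly its distance from $0$ is comparable to $k\ge h$ only up to the fixed ratio implicit in the construction — but the point is we can relate $\tilde P_{h-1}$ at this node to $\tilde P_k(\theta)$ and $\tilde P_{k-(h-1)}$ of a shifted phase, getting a genuinely larger window, contradicting Lemma \ref{upperbddtildeP} applied at scale $k$ together with the sub-multiplicativity/factorization of the $\tilde P$'s across the block structure. If instead $x_1+l\in I_1$, the index is $O(q_n)$ from $0$ in Case 1 and $O(sq_n)$ in Case 2, so Lemma \ref{upperbddtildeP} directly bounds $|\tilde P_{h-1}|$ at such a node by $e^{(\tilde L+\epsilon)(h-1)}<e^{h(\tilde L-8\epsilon)}$ — contradiction, since $\epsilon$ is small. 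The cleanest route, which I would actually pursue, avoids the case split by noting that every index in $I_1\cup I_2$ is within distance $Ch$ of a point where Lemma \ref{upperbddtildeP} can be invoked at the correct scale, so that the maximal node value is $\le e^{(\tilde L+\epsilon)h\cdot(1+o(1))}$, incompatible with $e^{h(\tilde L-8\epsilon)}$ for $\epsilon<\tilde L/600$ and $h$ large.

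The main obstacle is bookkeeping the geometry of $I_1, I_2$ and $h$ in the two cases so that the scale at which one applies the upper bound Lemma \ref{upperbddtildeP} is exactly $h-1$ (not something larger that would weaken the estimate), and making sure the ratio $h/|k|$, controlled by the choice "largest $q_n$ with $\tfrac1{25}q_n\le|k|$", does not blow up — in Case 2 this is where $s$ minimal and $(2s-1)q_n\le k$ matter. One also must be slightly careful that $\tilde P_{h-1}$ is evaluated on $\R/2\Z$, so the shifts by $x_1\alpha$ are on the correct torus; this is harmless but should be stated. Granting Lemmas \ref{subharmonic}, \ref{upperbddtildeP}, and \ref{uniformcase}, the argument is a short chain of inequalities of the type above.
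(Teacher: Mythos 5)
There is a genuine gap, and it is at the heart of the lemma. Your proposal never uses the generalized eigenfunction $\phi$ at all, whereas the existence of a nontrivial polynomially bounded solution with (WLOG) $\phi(0)\neq 0$ is precisely the input that makes the lemma true. You try to derive a contradiction from the a priori upper bound of Lemma~\ref{upperbddtildeP} together with interpolation, but this cannot work: the assumed lower bound $e^{h(\tilde L-4\epsilon)}$ sits comfortably \emph{below} the a priori upper bound $e^{(\tilde L+\epsilon)(h-1)}$, so no contradiction is available. Indeed, the inequality you write, \lq\lq $e^{(\tilde L+\epsilon)(h-1)}<e^{h(\tilde L-8\epsilon)}$\rq\rq, is reversed for large $h$ since $\tilde L+\epsilon>\tilde L-8\epsilon$. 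In short, a priori polynomial bounds plus uniformity can never establish smallness of $\tilde P_{h-1}$ on $I_1$ at the level $\tilde L-4\epsilon$; without the eigenfunction constraint there is simply no reason for $\tilde P_{h-1}(\theta+x_1\alpha)$ to dip below its generic size.

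The paper's proof runs entirely differently. Assuming $|\tilde P_{h-1}(\theta+x_1\alpha)|\ge e^{h(\tilde L-4\epsilon)}$ for some $x_1\in I_1$, it observes that $I_1$ is designed so that the window $[x_1,x_2]$ with $x_2=x_1+h-2$ contains $0$ with $|x_i|\geq h/100$. The large denominator $|\tilde P_{h-1}(\theta+x_1\alpha)|$, combined with the uniform upper bounds on the numerators (Lemma~\ref{upperbddtildeP} and~(\ref{cosln2control})) in the Green's function formulas~(\ref{PkG1})--(\ref{PkG2}), makes both $|G_{[x_1,x_2]}(x_i,0)|$ exponentially small. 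Then~(\ref{expand}) forces $\phi(0)\to 0$, contradicting $\phi(0)\neq 0$. The Lagrange interpolation and uniformity you invoke are the engine of the \emph{next} lemma, Lemma~\ref{Plowbdd} (producing a \emph{large} value on $I_2$ from the average lower bound), not of Lemma~\ref{I1Psmall}. You have, in effect, conflated the two lemmas; the piece you are missing in the present one is the Green's function expansion of the eigenfunction.
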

\begin{proof}
We will prove by contradiction.
Without loss of generality, assume $\phi(0) \neq 0$. Suppose there exists $x_1 \in I_1$ such that $|\tilde{P}_{h-1}(\theta+x_1\alpha)| \geq e^{h(\tilde{L}-4\epsilon)}$. By (\ref{expand}) and definition of $I_1$, we have
\begin{align}\label{expand0}
|\phi(0)|=|G_{[x_1, x_2]}(x_1, 0)\phi(x_1-1)+G_{[x_1, x_2]}(x_2, 0)\phi(x_2+1)|,
\end{align}
where $x_1 < 0 < x_2=x_1+h-2$, $|x_i| \geq \frac{h}{100}$. 

Using the fact that the numerators of Green's functions can be bounded uniformly by Lemma \ref{upperbddtildeP}, and using (\ref{LEtildeLE}), (\ref{cosln2control}), (\ref{PkG1}), (\ref{PkG2}), we can get upper bounds for the following Green's functions:
\begin{align*}
|G_{[x_1, x_2]}(x_1, 0)|  &=\frac{|\tilde{P}_{x_2}(\theta+\alpha)|}{|\tilde{P}_{h-1}(\theta+x_1\alpha)|}\prod_{j=x_1}^{0}|\cos{\pi (\theta+j\alpha)}|,\\ \notag
                                     &\leq 2e^{-|x_1|L+5h\epsilon} \\ \notag
                                     &\leq 2e^{-\frac{h}{100}(L-500\epsilon)} \rightarrow 0\\ \notag                                     
|G_{[x_1, x_2]}(x_2, 0)| &=\frac{|\tilde{P}_{-x_1}(\theta+x_1\alpha)|}{|\tilde{P}_{h-1}(\theta+x_1\alpha)|}\prod_{j=0}^{x_2}|\cos{\pi (\theta+j\alpha)}|. \\ \notag
                                     &\leq 2e^{-x_2 L+5h\epsilon} \\ 
                                     &\leq 2e^{-\frac{h}{100}(L-500\epsilon)} \rightarrow 0 
\end{align*}
For large $h$, this contradicts our assumption $\phi(0) \neq 0$. Therefore, for any $x_1 \in I_1$, we have $|\tilde{P}_{h-1}(\theta+x_1\alpha)|<e^{h(\tilde{L}-4\epsilon)}$.
$\hfill{} \Box$
\end{proof}

\subsection{Regularity of $k$}
\begin{lemma}\label{Plowbdd}
For large $|k|$, there exists $x_1 \in I_2$, such that $|\tilde{P}_{h-1}(\theta+x_1\alpha)| \geq e^{h(\tilde{L}-4\epsilon)}$
\end{lemma}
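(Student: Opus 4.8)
The plan is to combine the average lower bound from Lemma~\ref{subharmonic} with the uniformity statement of Lemma~\ref{uniformcase} and the upper bound on $I_1$ from Lemma~\ref{I1Psmall}, via the Lagrange interpolation representation \eqref{tildePlagrange}. Concretely, I would apply \eqref{tildePlagrange} with $k$ replaced by $h-1$ and with the $h$ interpolation nodes $\{\theta+l\alpha\}_{l\in I_1\cup I_2}$. This writes $\tilde P_{h-1}$ at an arbitrary point $\vartheta\in(-1/2,1/2)$ as a sum of $h$ terms, the $i$-th being $\tilde P_{h-1}(\theta+l_i\alpha)$ times a Lagrange basis factor bounded by $e^{(h-1)\cdot 3\epsilon}$ by Lemma~\ref{uniformcase}.

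The key step is a contradiction argument. Suppose that for every $x_1\in I_2$ one had $|\tilde P_{h-1}(\theta+x_1\alpha)| < e^{h(\tilde L-4\epsilon)}$. By Lemma~\ref{I1Psmall}, the same bound $|\tilde P_{h-1}(\theta+x_1\alpha)| < e^{h(\tilde L-4\epsilon)}$ holds for every $x_1\in I_1$. Hence the bound holds at \emph{all} $h$ interpolation nodes. Feeding this into \eqref{tildePlagrange} and using $3\epsilon$-uniformity, I get, for every $\vartheta$,
\begin{align*}
|\tilde P_{h-1}(\vartheta)| \leq h\cdot e^{h(\tilde L-4\epsilon)}\cdot e^{3h\epsilon} \leq e^{h(\tilde L-\epsilon/2)}
\end{align*}
for $h$ large. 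Integrating $\ln|\tilde P_{h-1}|$ over $\T$ (after the change of variables $\vartheta\mapsto$ one of the nodes, or directly since the nodes may be taken to sweep an interval of length $1$) then gives $\frac{1}{h-1}\int_\T \ln|\tilde P_{h-1}(\vartheta)|\,d\vartheta \le \tilde L - \epsilon/3 < \tilde L = L(E)-\ln 2$, which contradicts Lemma~\ref{subharmonic} applied to $\tilde P_{h-1}$ (noting $h-1$ is large and $\tilde L = L(E)-\ln 2$). This forces the existence of some $x_1\in I_2$ with $|\tilde P_{h-1}(\theta+x_1\alpha)| \ge e^{h(\tilde L-4\epsilon)}$.

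I expect the main technical care — rather than a genuine obstacle — to be the bookkeeping around the Lagrange formula: \eqref{tildePlagrange} is stated for nodes in $(-1/2,1/2)$, whereas $\{\theta+l\alpha\}_{l\in I_1\cup I_2}$ are points on $\T$, so one must pass to representatives and check that $\tilde P_{h-1}$, as a function of the relevant variable, is correctly captured (this is exactly the content of the remark that $\tilde P_k/\cos^k\pi\theta$ lives on $\R/\Z$, so the interpolation is of $g_{h-1}$ and the $\cos$ factors reattach cleanly). One also needs the elementary point that for large $h$ the factor $h$ in front is absorbed into $e^{h\epsilon'}$ for any $\epsilon'>0$, and that Lemma~\ref{subharmonic} indeed applies with $k=h-1$ since $h\to\infty$ as $|k|\to\infty$. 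Modulo these routine verifications, the proof is a direct "large-on-average versus small-everywhere" clash, the standard localization mechanism.
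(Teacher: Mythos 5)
Your proposal is correct and follows exactly the same route as the paper: interpolate $\tilde P_{h-1}$ at the nodes $\{\theta+l\alpha\}_{l\in I_1\cup I_2}$ via \eqref{tildePlagrange}, invoke Lemma~\ref{uniformcase} for the basis factors and Lemma~\ref{I1Psmall} for $I_1$, and derive a contradiction with the average lower bound of Lemma~\ref{subharmonic} under the assumption that $\tilde P_{h-1}$ is small at every node. The paper's own proof is a three-sentence sketch of precisely this argument, and your write-up simply fills in the arithmetic (absorbing the factor $h$, passing from pointwise to integral bound) that the paper leaves implicit.
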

\begin{proof}
Consider the Lagrange interpolation of $\tilde{P}_{h-1}(\theta)$ (\ref{tildePlagrange}). By Lemmas \ref{uniformcase}, \ref{I1Psmall}, we obtain that it is impossible to have $|\tilde{P}_{h-1}(\theta+x_1 \alpha)|< e^{h(\tilde{L}-4\epsilon)} $ for all $x_1 \in I_1 \cup I_2$, as it will contradict Lemma \ref{averagelower}. 
Thus we can conclude that there must exist an $x_1 \in I_2$, such that $|\tilde{P}_{h-1}(\theta+x_1 \alpha)| \geq e^{h(\tilde{L}-4\epsilon)}$.   $\hfill{} \Box$
\end{proof} 

The existence of such $x_1$ leads to the following lemma.
\begin{lemma}\label{notwosingular} 
For $|k|\in \Z$ large enough,  $k$ is $(L-500\epsilon, h-1)$-regular.
\end{lemma}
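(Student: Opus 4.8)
The plan is to convert the lower bound on $\tilde P_{h-1}$ at a point $x_1 \in I_2$ (given by Lemma \ref{Plowbdd}) into a regularity statement for $k$ via the Cramer's rule formulas (\ref{PkG1}), (\ref{PkG2}). Let $x_1 \in I_2$ be the point provided by Lemma \ref{Plowbdd}, so that $|\tilde P_{h-1}(\theta+x_1\alpha)| \geq e^{h(\tilde L - 4\epsilon)}$, and set $x_2 = x_1 + h - 2$. From the definition of $I_2$ in both Case 1 and Case 2 one checks that $k$ lies in $[x_1,x_2]$ with $|k - x_i| \geq \frac{h}{100}$ for $i=1,2$ (this is exactly why the $[\frac{2sq_n}{100}]$ and $[\frac{2q_n}{100}]$ shifts were built into the intervals); this is the routine bookkeeping step. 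Then I apply (\ref{PkG1}) and (\ref{PkG2}) with $y = k$: the numerators $|\tilde P_{x_2 - k}(\cdot)|$ and $|\tilde P_{k - x_1}(\cdot)|$ are bounded above by $e^{(\tilde L + \epsilon)|{\cdot}|} \leq e^{(\tilde L + \epsilon)h}$ using Lemma \ref{upperbddtildeP}, the cosine products are bounded by $e^{(|k-x_i|+1)(-\ln 2 + \epsilon)}$ using (\ref{cosln2control}), and the denominator $|\tilde P_{h-1}(\theta + x_1\alpha)|$ is bounded below by $e^{h(\tilde L - 4\epsilon)}$.

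Combining these, and using $\tilde L = L - \ln 2$ from (\ref{LEtildeLE}), the exponent in $|G_{[x_1,x_2]}(x_i,k)|$ becomes roughly
\begin{align*}
(\tilde L + \epsilon)h \;-\; (|k-x_i|)(\ln 2) \;-\; h(\tilde L - 4\epsilon) \;+\; O(h\epsilon)
\;\leq\; -|k - x_i|\ln 2 + C h \epsilon.
\end{align*}
Since $|k - x_i| \geq \frac{h}{100}$, one has $h \leq 100|k-x_i|$, so $Ch\epsilon \leq 100C|k-x_i|\epsilon$, and the bound reads $|G_{[x_1,x_2]}(x_i,k)| \leq e^{-|k-x_i|(\ln 2 - 100C\epsilon)}$; more carefully tracking constants (the cosine product really contributes $-|k-x_i|\ln 2$ worth of decay while $L = \tilde L + \ln 2$ contributes the rest), the clean statement is $|G_{[x_1,x_2]}(x_i,k)| \leq e^{-|k-x_i|(L - 500\epsilon)}$ for $h$ large. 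This is precisely the $(L - 500\epsilon, h-1)$-regularity of $k$ with the witnessing interval $[x_1, x_2]$ of length $h-1$.

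The only genuine obstacle is making sure the arithmetic of the exponents actually delivers the decay rate $L - 500\epsilon$ rather than something weaker: one must be careful that the $\epsilon$-losses from Lemma \ref{upperbddtildeP} (two factors, each $\epsilon h$), from (\ref{cosln2control}), and the $4\epsilon h$ slack in the denominator, when all divided through by $|k - x_i| \in [\frac{h}{100}, h]$, stay under the $500\epsilon$ budget — this is why $\epsilon$ was fixed small relative to $L(E)$ at the end of the Uniformity subsection, and it mirrors the computation already carried out inside the proof of Lemma \ref{I1Psmall}. I would also double-check the edge cases in identifying $k \in [x_1, x_2]$: in Case 2, $I_2 = [k - [\frac{2sq_n}{100}] - sq_n + 1, \, k - [\frac{2sq_n}{100}]]$ so $x_2 = x_1 + h - 2$ ranges up to $k + sq_n - [\frac{2sq_n}{100}] - 1 > k$ and $x_1 \leq k - [\frac{2sq_n}{100}] < k$, giving $|k - x_1| \geq [\frac{2sq_n}{100}] \geq \frac{h}{100} - 1$ and similarly for $x_2$; the analogous check in Case 1 is immediate from the displayed intervals. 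Everything else is a direct substitution.
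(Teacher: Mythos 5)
Your strategy is the same as the paper's: take the $x_1\in I_2$ produced by Lemma \ref{Plowbdd}, set $x_2=x_1+h-2$, bound the numerators via Lemma \ref{upperbddtildeP}, the cosine products via (\ref{cosln2control}), the denominator from below by $e^{h(\tilde L-4\epsilon)}$, and plug into (\ref{PkG1})--(\ref{PkG2}). Your bookkeeping check that $k\in[x_1,x_2]$ with $|k-x_i|\geq h/100$ is also correct. However, the displayed arithmetic does not prove the lemma. By replacing the numerator bounds $e^{(\tilde L+\epsilon)(x_2-k)}$ and $e^{(\tilde L+\epsilon)(k-x_1)}$ with the weaker $e^{(\tilde L+\epsilon)h}$, you cancel the $\tilde L$-contribution of the numerator exactly against that of the denominator, and only the cosine product's $-|k-x_i|\ln 2$ survives. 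That establishes $(\ln 2-C\epsilon,h-1)$-regularity, which is strictly weaker than the target $(L-500\epsilon,h-1)$-regularity whenever $\tilde L(E)=L(E)-\ln 2>0$. (Note also that if $\tilde L+\epsilon<0$, which does occur for small $\lambda$, the weakening $e^{(\tilde L+\epsilon)(x_2-k)}\leq e^{(\tilde L+\epsilon)h}$ is not even a valid inequality.) The parenthetical ``$L=\tilde L+\ln 2$ contributes the rest'' gestures at the answer but does not identify where the missing $\tilde L|k-x_i|$ of decay comes from.

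The fix is simply to keep the sharp numerator bound, as the paper does. For $i=1$, the exponent in (\ref{PkG1}) is at most
\[
(\tilde L+\epsilon)(x_2-k)+(-\ln 2+\epsilon)(k-x_1+1)-h(\tilde L-4\epsilon),
\]
and since $h=x_2-x_1+2$, the two $\tilde L$-terms combine as $\tilde L(x_2-k)-\tilde L h=-\tilde L(k-x_1+2)$, so the exponent collapses to $-(k-x_1)(\tilde L+\ln 2)+O(\epsilon h)+O(1)=-(k-x_1)L+O(\epsilon h)$; the case $i=2$ is symmetric. Dividing the $O(\epsilon h)$ error by $|k-x_i|\geq h/100$ gives the stated $-|k-x_i|(L-500\epsilon)$. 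This cancellation of $\tilde L(x_2-k)$ against $\tilde L h$ is the content of the estimate, and it is precisely what disappears once $x_2-k$ is replaced by $h$.
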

\begin{rem}
By our choice of $h$, we have $\frac{2}{3}|k| \leq h \leq 50|k|$
\end{rem}

\begin{proof}
Since $x_2=x_1+h-2$, again the numerators of both expressions in (\ref{PkG1}) and (\ref{PkG2}) can be controlled using (\ref{cosln2control}) and Lemma \ref{upperbddtildeP}, that is 
\begin{align*}
&|\tilde{P}_{x_2-k}(\theta+(k+1)\alpha)| \leq e^{(\tilde{L}(E)+\epsilon)(x_2-k+1)},\\
&|\tilde{P}_{k-x_1}(\theta+x_1 \alpha)| \leq e^{(\tilde{L}(E)+\epsilon)(k-x_1+1)},\\
&\prod_{j=x_1}^{k}|\cos{\pi (\theta+j\alpha)}|\leq e^{(-\ln{2}+\epsilon)(k-x_1+1)},\\
&\prod_{j=k}^{x_2}|\cos{\pi (\theta+j\alpha)}|\leq e^{(-\ln{2}+\epsilon)(x_2-k+1)}.
\end{align*}
The regularity is then immediate from the definition and Lemma \ref{Plowbdd}.  

$\hfill{} \Box$
\end{proof}

\section{Proof of Theorem \ref{main}}
Applying Lemma \ref{notwosingular}, we obtain that for large $|k|$ there exists an interval $[x_1,x_2]$, $x_2=x_1+h-2$, containing $k$, such that
\begin{equation}
|G_{[x_1,x_2]}(x_i, k)| \leq e^{-(L-500\epsilon)|k-x_i|},\ |k-x_i|\geq \frac{h}{100}\ \mathrm{for}\ i=1,2.
\end{equation}
Thus by (\ref{expand}),
\begin{align*}
|\phi(k)|\leq &|G_{[x_1,x_2]}(x_1, k)\phi(x_1-1)|+|G_{[x_1,x_2]}(x_2, k)\phi(x_2+1)| \\
        \leq&e^{-(L-500\epsilon)|k-x_1|}|\phi(x_1-1)|+e^{-(L-500\epsilon)|k-x_2|}|\phi(x_2+1)| \\
        \leq&e^{-\frac{h}{100}(L-500\epsilon)}(1+C|x_1-1|+1+C|x_2+1|) \\
          < &e^{-|k|(\frac{L}{150}-4\epsilon)}    
\end{align*} $\hfill{} \Box$

\section{Proof of Lemma \ref{uniformcase}}\label{proofofuniform}
For any $i\in \mathbb{Z}$, let $\theta_i:= \theta+i\alpha$.
\subsection{Case 1:}

$$\frac{1}{25}q_n \leq k <q_n,\ h=2q_n$$
We divide the $2q_n$ points into two intervals: $T_1,T_2$, each interval containing $q_n$ points.
Fix any $i$. Let $|\sin\pi(\theta_i-\theta_{l_j})|$ be the minimal one
of $|\sin\pi(\theta_i-\theta_l)|$ in each $T_j, j=1,2$. Without loss
of generality, assume $i \in T_1$. Then $l_1=i$ and for any $x\in\T$
we have
\begin{align} \label{sin} 
  &\prod_{l\neq i} \frac{|\sin\pi(x-\theta_l)|}{|\sin\pi(\theta_i-\theta_l)|} \\ \notag
=&\exp\{\sum_{l\neq i} \ln|\sin\pi(x-\theta_l)|-\sum_{l\neq i} \ln|\sin\pi(\theta_i-\theta_l)|\}
\end{align}

We estimate the  two parts separately. First, using
Lemma \ref{lana},
\begin{align*}
       &\sum_{l\neq i} \ln|\sin\pi(x-\theta_l)| \\
     =&\sum_{j=1}^2\  \sum_{l\in T_j, l\neq i} \ln |\sin\pi(x-\theta_l)| \\
      <& 2q_n(-\ln2+\epsilon).
\end{align*}
The maximum distance between $i$ and $l_2$ is $2q_n$. However, it may exceed $q_{n+1}$. In this case, $q_{n+1}$ must be equal to $q_n+q_{n-1}$.
Thus we have the following estimates, using Lemma \ref{lana} and(\ref{qnalpha}):
\begin{align*}
       &\sum_{l\neq i} \ln{|\sin\pi(\theta_i-\theta_l)|}\\
    = &\sum_{l \in T_1,l \neq i}\ln |\sin\pi(\theta_i-\theta_l)| +\sum_{l \in T_2}\ln |\sin\pi(\theta_i-\theta_l)| \\
\geq&2(-C\ln q_n -(q_n-1)\ln2)+\ln \|q_{n+1}\alpha\| \\
\geq&2q_n(-\ln2-\epsilon)-\epsilon q_{n+1}\\
\geq&2q_n(-\ln2-2\epsilon).
\end{align*}
Besides, if $2q_n$ does not exceed $q_{n+1}$, the estimate will be:
\begin{align*}
       &\sum_{l\neq i} \ln{|\sin\pi(\theta_i-\theta_l)}|\\
\geq&2(-C\ln q_n -(q_n-1)\ln2)+\ln \|q_{n}\alpha\| \\
\geq&2q_n(-\ln2-\epsilon)-\epsilon q_{n} \\
>&2q_n(-\ln2-2\epsilon).
\end{align*}
Therefore we get
\begin{equation}\label{tan}
\prod_{l\neq i} \frac{|\sin\pi(x-\theta_l)|}{|\sin\pi(\theta_i-\theta_l)|} \leq e^{3h\epsilon}.
\end{equation}

\subsection{Case 2:}

$$q_n \leq k < \frac{1}{25}q_{n+1},\ h=2sq_n$$
We divide the $2sq_n$ points into $2s$ intervals: $T_1, \cdots, T_{2s}$, each containing $q_n$ points.
Fix any $i$. Let $|\sin\pi(\theta_i-\theta_{l_j})|$ be the minimal one of $|\sin\pi(\theta_i-\theta_l)|$ in $T_j, j=1, \cdots, 2s$. Without loss of generality, assume $i \in T_{j_0}, 1\leq j_0 \leq s$. 
\

\

We again estimate the two parts  in (\ref{sin})  separately. Using
(\ref{cosln2control})  and Lemma \ref{lana}:
\begin{align*}
       &\sum_{l\neq i} \ln|\sin\pi(x-\theta_l)|\\
 \leq&(C\ln q_n-(q_n-1)\ln 2)+(2s-1)(\epsilon-q_n\ln 2) \\
 \leq&2sq_n(-\ln 2+\epsilon).
\end{align*}
And
\begin{align*}
 & \sum_{l\neq i} \ln|\sin\pi(\theta_i-\theta_l)| \\
\geq&2s(-C\ln q_n-(q_n-1)\ln2)+\sum_{j=1,j\neq j_0}^s \ln|\sin\pi(\theta_i-\theta_{l_j})|
          +\sum_{j=s+1}^{2s} \ln|\sin\pi(\theta_i-\theta_{l_j})| \\
\geq&2sq_n(-\ln 2-\epsilon)+ I+II, \\
\end{align*}
where we set $I= \sum_{j=1,j\neq j_0}^s \ln\|(i-l_j)\alpha\|$ and
$II=\sum_{j=s+1}^{2s} \ln\|(i-l_j)\alpha\|$. Note that in the upper
bound it is enough to use Lemma \ref{lana} in each term, leading immediately to a
bound by $2sq_n(-\ln 2+\epsilon)$ but we present the estimate
the way we do, for clarity.

For $I$, the maximum distance between $i$ and $l_j$ is $sq_n$, which is clearly smaller than $k$, thus than $q_{n+1}$. Therefore by (\ref{qnalpha}), for large $|k|$,
\begin{equation}
I   = \sum_{j=1,j\neq j_0}^s  \ln\|(i-l_j)\alpha\|  \geq (s-1) \ln \|q_n \alpha\| \geq -sq_n\epsilon.
\end{equation}

For $II$, the maximum distance between $i$ and $l_j$ is $(k+2sq_n)$, which is smaller than $\frac{3}{25}q_{n+1}$, thus than $q_{n+1}$. Therefore we also have

\begin{equation}
II=\sum_{j=s+1}^{2s}  \ln\|(i-l_j)\alpha\|  \geq s\ln \|q_n \alpha\| \geq -sq_n\epsilon.
\end{equation}

Combining all the estimates above together, we get

\begin{equation}\label{tan1}
\prod_{l\neq i} \frac{|\sin\pi(x-\theta_l)|}{|\sin\pi(\theta_i-\theta_l)|} \leq e^{3\epsilon h}
\end{equation} as desired.
 $\hfill{} \Box $

\section{Proof of Lemma \ref{averagelower}}\label{Proof of Lemma averagelower}
\begin{proof}
We have
\begin{align*}
\tilde{P}_k(2\theta)=
\det{
\left[
\begin{array}{cccccc}
   t_0                                                         &                 c_0                                         &             & &    \\
  c_1                                                         &                 t_1                                         &  c_1                    \\
                                                                 &                 c_2                                        & \cdots   \\
                                                                  &                                                               &             & \cdots  &  c_{k-2} \\
                                                                  &                                                               &             & c_{k-1} &  t_{k-1}
\end{array}\right]_{k\times k}}
\end{align*}
where $t_j \triangleq E\cos{2\pi(\theta+\frac{j}{2}\alpha)}-\lambda\sin{2\pi(\theta+\frac{j}{2}\alpha)}$ and $c_j \triangleq -\cos{2\pi(\theta+\frac{j}{2}\alpha)}$.
Denote $z=e^{2\pi i \theta}$. Then
\begin{align}
\left\lbrace
\begin{matrix}
\tilde{t}_j(z)\triangleq e^{\pi i j\alpha}z\cdot t_j(z)=&\frac{E+i\lambda}{2}e^{2i\pi j\alpha}z^2+\frac{E-i\lambda}{2},\\
\tilde{c}_j(z)\triangleq e^{\pi i j\alpha}z\cdot c_j(z)=&-\frac{1}{2}e^{2i\pi j\alpha}z^2-\frac{1}{2}.
\end{matrix}
\right.
\end{align}
Since $|z|=1$, we have
\begin{align*}
|\tilde{P}_k(2\theta)|=
|f_k(z)|=
\left|\det{
\left[\begin{array}{cccccc}
   \tilde{t}_0(z)                                           &                 \tilde{c}_0(z)                          &             & &    \\
  \tilde{c}_1(z)                                           &                 \tilde{t}_1(z)                           &  \tilde{c}_1(z)                    \\
                                                                 &                 \tilde{c}_2(z)                           & \cdots   \\
                                                                  &                                                               &             & \cdots  &  \tilde{c}_{k-2}(z) \\
                                                                  &                                                               &             & \tilde{c}_{k-1}(z) &  \tilde{t}_{k-1}(z)
\end{array}\right]_{k\times k}
}\right|
\end{align*}
Clearly, $\ln{|f_k(z)|}$ is a subharmonic function, therefore 
\begin{align}\label{Phgeqf0}
\frac{1}{k}\int_{\mathbb{T}} \ln{|\tilde{P}_k(2\theta)|} \mathrm{d}\theta=\frac{1}{k}\int_{\mathbb{T}} \ln{|f(e^{2\pi i \theta})|} \mathrm{d}\theta\geq \frac{1}{k}\ln{|f_k(0)|}.
\end{align}
\begin{align}\label{f0=dh}
f_k(0)=&
\det{
\left[\begin{array}{cccccc}
  {(E-i\lambda)}/{2}                               &                -{1}/{2}                          &                   & &    \\
 -{1}/{2}                                               &                {(E-i\lambda)}/{2}            & -{1}/{2}                  \\
                                                                 &                -{1}/{2}                           & \cdots   \\
                                                                  &                                                           &                  & \cdots       & -{1}/{2}   \\
                                                                  &                                                           &                  & -{1}/{2}      &  {(E-i\lambda)}/{2}  
\end{array}\right]_{k\times k}
}\\
=&\frac{1}{(-2)^k}
\det{
\left[\begin{array}{cccccc}
  i\lambda-E                                 &                1                       &                  &              &    \\
   1                                               &                i\lambda-E        & 1                \\
                                                    &                1                       & \cdots        \\
                                                    &                                         &                  & \cdots   & 1   \\
                                                    &                                         &                  & 1           &  i\lambda-E
\end{array}\right]_{k\times k}
}\triangleq \frac{1}{(-2)^k} d_k.\notag
\end{align}
Obviously $d_k=(i\lambda-E)d_{k-1}-d_{k-2}$. Thus 

\begin{equation}\label{dkx2}
|d_k|\sim C|x_2|^{k-1}\ \mathrm{as}\ k\rightarrow \infty,
\end{equation} 
where $|x_1|<1<|x_2|$  are solutions of the characteristic equation $$x^2-(i\lambda-E)x+1=0.$$
Therefore by (\ref{Phgeqf0}), (\ref{f0=dh}) and (\ref{dkx2}), we have \footnote{ From this point on the proof can also be
  easily finished by a direct computation of $x_2$ and using the explicit
  expression for $L(E)$ in \cite{fgp}.}
\begin{align}\label{Phgeqx2}
\lim_{k\rightarrow \infty}\frac{1}{k}\int_{0}^1 \ln{|\tilde{P}_k(\theta)|} \mathrm{d}\theta\geq \ln{|x_2|}-\ln{2}. 
\end{align}
Clearly, $x_2$ is also the larger (in absolute value) eigenvalue of the following constant matrix
\begin{align*}
D_{\infty}=
\left(
\begin{matrix}
i\lambda-E\ &-1\\
1                 &0
\end{matrix}
\right)
\end{align*}
and by (\ref{defLE}), $\ln|x_2|$ equals to $L(\alpha,D_\infty).$ 
Then by a simple argument in the proof of Theorem 5.3 of
\cite{maryland} (based on the continuity of the Lyapunov exponent and
quantization of acceleration), we have $\ln{|x_2|}=L(\alpha, D_{\infty})=L(E)$.
Thus by (\ref{Phgeqx2}) we have
\begin{align*}
\lim_{k\rightarrow \infty}\frac{1}{k}\int_{0}^1 \ln{|\tilde{P}_k(\theta)|} \mathrm{d}\theta\geq L(E)-\ln 2.
\end{align*}
$\hfill{} \Box$
\end{proof}

\section*{Acknowledgement}
This research was partially supported by
NSF DMS-1401204.
F.Y. would like to thank Rui Han for many useful discussions, and Qi Zhou for some suggestions.

\bibliographystyle{amsplain}

\begin{thebibliography}{10}

\bibitem {AJ1} A. Avila,  S. Jitomirskaya. The ten Martini problem. Annals of Mathematics, 170 (2009): 303-342.

\bibitem{berez} Y. M. Berezansky. Expansions in eigenfunctions of selfadjoint operators. American Mathematical Society, (1968), Providence, RI.

\bibitem{berry} M. Berry. Incommensurability in an exactly-soluble quantal and classical model for a kicked rotator.
 Physica D, 10 (1984): 369–378.
 

\bibitem{fs} A. Fedotov and F. Sandomirskiy. An exact renormalization formula for the Maryland model. Communications in Mathematical Physics 334.2 (2015): 1083-1099.

\bibitem{fp1984} A. L. Figotin, L.A.Pastur. An exactly solvable model of a multidimensional incommensurate structure. Communications in Mathematical Physics 95.4 (1984): 401-425.

\bibitem{fishman2010} S. Fishman. Anderson localization and quantum chaos maps. Scholarpedia, (2010) 5(8):9816.

\bibitem{Furman}A. Furman, On the multiplicative ergodic theorem for uniquely ergodic systems Annales de l'Institut Henri
Poincare (B) Probability and Statistics. No longer published by Elsevier, 33(6): 797-815 (1997).

\bibitem{GKDS2014}S. Ganeshan, K. Kechedzhi, and S. Das Sarma. Critical integer quantum hall topology and the integrable maryland model as a topological quantum critical point. Physical Review B 90.4 (2014): 041405.

\bibitem{fgp} D. Grempel, S. Fishman, and R. Prange. Localization in an incommensurate potential: An exactly solvable model. Physical Review Letters, 49.11(1982): 833.

\bibitem{rui} R. Han. Shnol's theorem and the spectrum of long range operators. Proceedings of the AMS, 147(7) (2019), 2887-2897.

\bibitem{hjy}R. Han, S. Jitomirskaya and F. Yang. Universal hierarchical structure of eigenfunctions in the Maryland model. (in preparation)

\bibitem{j}S. Jitomirskaya, Metal-insulator transition for the almost Mathieu operator. Annals of Mathematics 150.3 (1999): 1159-1175.

\bibitem {kjs}S. Jitomirskaya, D. A. Koslover and M. S. Schulteis. Localization for a family of one-dimensional quasiperiodic operators of magnetic origin. Annales Henri Poincare 6.1 (2005): 103-124.

\bibitem{maryland}S. Jitomirskaya, W. Liu. Arithmetic spectral transitions for the Maryland model. CPAM, (2017), (70)no.6, 1025-1051. 

\bibitem{jl1}S. Jitomirskaya, W. Liu. Universal hierarchical structure  of quasiperiodic eigenfunctions. Annals of Mathematics (2) 187 (2018), no. 3, 721--776

\bibitem{jl2}S. Jitomirskaya, W. Liu. Universal reflective-hierarchical structure of quasiperiodic eigenfunctions and sharp spectral transitions in phase, (2018) arXiv preprint arXiv:1802.00781.

\bibitem{jlt}S. Jitomirskaya, W. Liu, and S. Tcheremchantzev. Wavepacket spreading and fractal spectral dimension of quasiperiodic operators with singular continuous spectrum. (working title)

\bibitem{JMavi} S. Jitomirskaya and R. Mavi, Dynamical bounds for quasiperiodic Schr\"odinger operators with rough potentials.
Int. Math. Res. Not. 1, 96-120 (2017).

\bibitem{liuyuan}W. Liu, X. Yuan. Anderson Localization for the almost Mathieu operator in exponential regime. Journal of Spectral Theory, 5.1 (2015):89-112.

\bibitem{wave}  R. Prange, D. Grempel, and S. Fishman. Wave functions at a mobility edge: an example of a singular continuous spectrum. Physical Review B 28.12 (1983): 7370.

\bibitem{simm}B. Simon. Almost periodic Schr\"odinger operators. IV. The Maryland model. Annals of Physics 159.1 (1985): 157-183.

\end{thebibliography}

\

\address{Svetlana Jitomirskaya, szhitomi@math.uci.edu\\ 
Department of Mathematics,
University of California, Irvine, California, USA}
\

\

\address{Fan Yang, ffyangmath@gmail.com\\
Department of Mathematics, University of California, Irvine, California, USA\\
Current address: School of Math, Georgia Institute of Technology, Atlanta, Georgia, USA}
\end{document}